\documentclass[11pt,letterpaper,oneside]{article}
\usepackage{graphicx}
\usepackage{amssymb,amsmath,amsfonts,amsthm}
\usepackage{bm}
\usepackage{graphicx}
\usepackage{bbm}
\usepackage{epstopdf}
\usepackage{natbib}
\usepackage{amsfonts}
\usepackage{booktabs} 
\usepackage{array} 
\usepackage[above, below]{placeins}
\usepackage{glossaries}
\usepackage{verbatim}
\usepackage{setspace}

\usepackage[a4paper,left=1in,top=1in,right=1in,bottom=1in,ignoreheadfoot]{geometry}
\usepackage{algorithm}
\usepackage{setspace}

\providecommand{\A}{\mathcal{A}}

\providecommand{\1}{\mathbbm{1}}

\providecommand{\E}{\mathbb{E}}
\providecommand{\KL}{\mathrm{KL}}

\newcommand\diff{\mathrm{d}}

\usepackage{threeparttable}
\usepackage{pgf}
\usepackage{tikz}
\usetikzlibrary{fit,positioning,arrows,automata}
\tikzset{
    >=stealth',
    punkt/.style={
           rectangle,
           rounded corners,
           draw=black, very thick,
           text width=6.5em,
           minimum height=2em,
           text centered},
    pil/.style={
           ->,
           thick,
           shorten <=2pt,
           shorten >=2pt,}
}

\usepackage{caption}
\usepackage{subcaption}
\usetikzlibrary{calc}

\newtheorem{theorem}{Theorem}[section]

\newtheorem{definition}{Definition}

    \def\independenT#1#2{\mathrel{\setbox0\hbox{$#1#2$}%
    \copy0\kern-\wd0\mkern4mu\box0}}

\begin{document}
\graphicspath{{Plots/}} %

\setlength{\parindent}{0pt}
\setlength{\parskip}{\baselineskip}

\title{Approximate Models and Robust Decisions}

\author{James Watson and Chris Holmes\footnote{Chris Holmes is Professor of Statistics, Department of Statistics, University of Oxford, UK. email: cholmes@stats.ox.ac.uk}}
\maketitle

\begin{abstract}
Decisions based partly or solely on predictions from probabilistic models may be sensitive to model misspecification. Statisticians are taught from an early stage that ``all models are wrong'', but little formal guidance exists on how to assess the impact of model approximation on decision making, or how to proceed when optimal actions appear sensitive to model fidelity. 
This article presents an overview of recent developments across different disciplines to address this. We review diagnostic techniques, including graphical approaches and summary statistics, to help highlight decisions made through minimised expected loss that are sensitive to model misspecification. 
We then consider formal methods for decision making under model misspecification by quantifying stability of optimal actions to perturbations to the model within a neighbourhood of model space. 
This neighbourhood is defined in either one of two ways. Firstly, in a strong sense via an information (Kullback-Leibler) divergence around the approximating model. Or using a nonparametric model extension, again centred at the approximating model, in order to `average out' over possible misspecifications.
This is presented in the context of recent work in the robust control, macroeconomics and financial mathematics literature. We adopt a Bayesian approach throughout although the methods are agnostic to this position.

\vspace{0.2in} \noindent {\bf{Keywords}}: Decision theory; Model misspecification; {\em{D-open}} problems; Kullback-Leibler divergence; Robustness; Bayesian nonparametrics

\end{abstract}

\section{Introduction}

This article presents recent developments in robust decision making from approximate statistical models. The central message of the paper is this, that the consequence of statistical model misspecification is contextual and hence should be dealt with under a decision theoretic framework \citep{berger1985statistical, parmigiani2009decision}. As a trivial illustration consider the following example: suppose that data arise from an exponential distribution, $x \sim \exp(\lambda)$, yet the statistician adopts a normal model, incorrectly assuming  $x \sim N(\mu, \sigma^2)$. If interest is in the estimation of the mean $E[X]$ and the sample size is large there may be little consequence in the misspecification. However if the focus is on the probability of an interval event, say $X \in [a, b]$, then there might be far reaching consequences in using the model. 
Of course this is a toy problem and careful model checking and refinement will help in reducing misspecification, but pragmatically, especially in modern high-dimensional data settings, it seems to us inappropriate to separate the issue of model misspecification from the consequences, context, and rationale of the modelling exercise. 

Statisticians are taught from an early stage that ``essentially, all models are wrong, but some are useful'' \citep*{box1987empirical}. By ``wrong'' we will take to mean misspecified and by ``useful'' we will take to mean helpful for aiding actions (taking decisions). We will refer to such situations as {\em{D-open}} problems, to highlight that Nature's true model is outside of the decision makers knowledge domain, c.f. {\em{M-open}} in Bayesian statistics which refers to problems in updating beliefs when the model is known to be wrong \citep*{bernardo1994bayesian}. We will adopt a Bayesian standpoint throughout although the approach we develop is generic. We will assume there is uncertainty in some aspect of the world\footnote{\cite{savage1954foundations} refers to $\Theta$ as the ``small world'' relevant to the decision.}, $\theta \in \Theta$, which if known would determine the loss in taking an action $a$ as quantified through a real-valued measurable loss-function, $L_a(\theta)$. The loss will often be a joint function of states and observables, $L_a(\theta, x)$, although we shall suppress this notation for convenience. Uncertainty in $\theta$ is characterised via a probability distribution $\pi_I(\theta)$ given all available information $I$. Without loss of generality we will assume that $\theta$ relates to parameters of a probability model and information $I$ is in the form of data, $x \in {\cal{X}}$, and a joint model $\pi(x, \theta)$, such that,
$$
\pi_I(\theta) \equiv \pi(\theta | x) \propto f(x; \theta) \pi(\theta),
$$
where $\pi(x, \theta)$ is factorised according to the sampling distribution (or likelihood) $f(x; \theta)$ and the prior $\pi(\theta)$; although more generally $\pi_I(\theta)$ simply represents the statisticians best current beliefs about the value of the unknown state $\theta$. Following the axiomatic approach of \cite{savage1954foundations} the rational coherent approach to decision making is to select an action $\hat{a}$ from the set of available actions $a \in A$ so as to minimise expected loss,
\begin{equation}
\label{eq:savage}
\hat{a} = \arg \inf_{a \in A} \E_{\pi_I(\theta)}[L_a(\theta)].
\end{equation}
This underpins the foundations of Bayesian statistics \citep*{bernardo1994bayesian}. The problem is that (\ref{eq:savage}) assumes perfect precision in specifying $\pi(x, \theta)$. In reality the model $\pi(x, \theta)$ is misspecified, such that the decision maker acknowledges that $f(x; \theta)$ may not be Nature's true sampling distribution or $\pi(\theta)$ does not reflect all aspects of prior subjective beliefs in $f(x; \theta)$ or on the marginal $\pi(x) = \int_{\theta} \pi(x, \theta) d\theta$. This paper presents diagnostics and formal methods to assist in exploring the potential impact of this misspecification.
 
It is important to note that we will not spend much time on the area of pure inference problems such as robust estimation of summary functionals for which there is a substantial literature \citep*{huber2011robust}, or on recent work on the use of loss functions to construct posterior models \citep*{Bissiri2012a,Bissiri2013}. We shall also pass quickly over the use of conventional prior sensitivity analysis and robust ``heavy tailed'' priors and likelihoods.  We are principally concerned with \textit{ex-post}\footnote{Meaning here 'once the modelling has been completed'. In a Bayesian setting, this refers to dealing directly with the posterior quantities.} settings where $\pi(x, \theta)$ has been specified to the best of the modellers ability under the practical constraints of computation and time, and where concerns arise as to whether $\pi(x, \theta)$ represents the modeller's true marginal $\pi(x)$ to sufficient precision. This is particularly important when $\theta$ pertains to a high-dimensional complex model or to the value of a future predicted observation.

There is a rich literature in Bayesian statistics on model robustness, the vast majority of which relates to sensitivity to specification of the prior $\pi(\theta)$. We review the material in detail below but mention here the overviews in \cite{berger1994overview}, \cite{insua2000robust} and \cite{ruggeri2005robust}. Bayesian robustness was a highly active area through the 1980s to mid-1990s. Interest tailored off somewhat since that time, principally due to the arrival of computational methods such as Markov chain Monte Carlo (MCMC) coupled with developments in hierarchical models, nonlinear models and nonparametric priors, see e.g.  \cite{chipman1998bayesian}, \cite{robert2004monte}, \cite{rasmussen2006gaussian},  \cite{denison2002nonlinear}, and \cite{hjort2010bayesian}. These methods allow for very flexible model specifications alleviating the historic concern that $\pi(x, \theta)$ was indexing a restrictive sub-class of models. 
However, a number of recent factors merit a reappraisal. In the 1990s and 2000s computational advances and hierarchical models broadly outpaced the complexity of data sets being considered by statisticians. In more recent times very high-dimensional data are becoming common, the so called ``big data'' era, whose size and complexities prohibit application of fully specified carefully crafted models, (e.g. \citep*{national2013Frontiers}, Chapter 7). Related to this, approximate probabilistic inference techniques that are misspecified by design have emerged as important tools for applied statisticians tackling complex inference problems. For example, models involving composite likelihoods, integrated nested Laplace approximations (INLA), Variational Bayes, Approximate Bayesian Computation (ABC), all start with the premise of misspecification, see e.g. \cite{beaumont2002approximate}, \cite{fearnhead2012constructing}, \cite{marjoram2003markov}, \cite{marin2012approximate},  \cite{minka2001expectation}, \cite{ratmann2009model}, \cite{rue2009approximate}, \cite{varin2011overview}, and \cite{wainwright2003}. Finally there have been recent developments in coherent risk measures within the macroeconomics and mathematical finance literature, building on areas of robust control, which are of importance and relevance to statisticians, as outlined in Section 2 below.

The rest is as follows. In Section 2 we review some background literature on decision robustness and quantification of expected loss under model misspecification. In Section 3 we review diagnostic tools to assist applied statisticians in identifying actions which may be sensitive to model fidelity. Section 4 presents formal methods for summarising decision stability, by exploring the consequence of misspecification within local neighbourhoods around the approximate model. Section 5 contains illustrations. Conclusions are made in Section 6.

\section{Background on decisions under model misspecification}
We first review some of the background literature on decisions made under model misspecification.
\subsection{Minimax}

The first axiomatic approach to robust statistical decision making was made by Wald (\citeyear{wald1950statistical}). In the absence of a true model, Wald interpreted the 
decision problem as a zero sum two-person game, following Von Neumann and Morgenstern's work on game theory \citep*{von1947theory}. To be 
robust the statistician protects himself against the worst possible outcome, selecting an action $\hat{a}$ according to the minimax rule, which for the purposes of this paper we can consider as\footnote{Wald's original work considered selection of decision functions, $\delta(x) \in A$, by non-conditional loss quantified as frequentist risk, $R[F_X, \delta(x)] = \int L(\delta, x) F(dx),$ with $x \in {\cal{X}}$ from unknown distribution $F_X$.}, 
$$
\hat{a} = \arg \inf_{a \in A} \left[ \sup_{\theta \in \Theta} L_a(\theta) \right].
$$
This is akin to the decision maker playing a two-person game with a malevolent Nature, where losses made by one agent will be gained by the other (zero sum). On selection of an action, Nature will select the worst possible outcome, equivalent to the assumption of a point mass distribution taken reactively to your choice of action, $$ \delta_{\theta^*_a}(\theta),
$$
where,
$$
\theta^*_a = \arg \sup_{\theta \in \Theta} L_a(\theta).
$$
Although elegant in its derivation the minimax rule has severe problems from an applied perspective. The decision maker following the minimax rule is not rational and treats all situations with extreme pessimism. It assumes that Nature is reactive in selecting $\delta_{\theta^*_a}(\theta)$ for your choice of $a \in A$ irrespective of the evidence from existing information $I$ on the plausible values of $\theta$. Subsequent to Wald there has been considerable work to develop more applied procedures that protect against less extreme outcomes.

\subsection{Robust Bayesian statistics}

Under a strict Bayesian position there is no issue with model robustness. You precisely specify your subjective beliefs through $\pi(x, \theta)$ and condition on data to obtain posterior beliefs, taking actions according to the Savage axioms. However, even the modern founders of Bayesian statistics acknowledged issues with an approach that assumes infinite subjective precision,  
\begin{itemize}
\item[]  ``Subjectivists should feel obligated to recognise that any opinion (so much more the initial one) is only vaguely acceptable... So it is important not only to know the exact answer for an exactly specified initial problem, but what happens changing in a reasonable neighbourhood the assumed initial opinion.'' De Finetti, as quoted in \cite{dempster1975subjectivist}
\item[] ``...in practice the theory of personal probability is supposed to be an idealization of one's own standard of behaviour; that the idealization is often imperfect in such a way that an aura of vagueness is attached to many judgements of personal probability...'' \cite{savage1954foundations}
\end{itemize}
As Berger points out, many people somewhat distrust the Bayesian approach as ``Prior distributions can never be quantified or elicited exactly (i.e. without error), especially in finite amount of time'' --  Assumption II in \cite{berger1984robust}. In which case what does the resulting posterior distribution $\pi(\theta | x)$ actually represent? 

An intuitive solution is to first specify an operational model $\pi_0$, to the best of your available time and ability,  and then investigate sensitivity of inference or decisions to departures 
around $\pi_0$, typically assuming that $f(x ; \theta)$ is known so that divergence is with respect to the prior. This idea has origins in the work of \cite{robbins1952asymptotically} and \cite{good1952rational} with many important contributions since that time. We mention just a few pertinent areas below, referring the interested reader to the review articles of \cite{berger1984robust,berger1994overview},  \cite{wasserman1992recent}, and \cite{ruggeri2005robust}, as well as the collection of papers in the edited volumes of \cite{kadane1984robustness} and \cite{insua2000robust}. 

The resulting robust Bayesian methods are usefully classified as either ``local'' or ``global''.
Local approaches look at functional derivatives of posterior quantities of interest with respect to perturbations around the baseline model, e.g. \cite{ruggeri1993infinitesimal} \cite{sivaganesan2000global}; see also \cite{kadane1978stable} who consider asymptotic stability of decision risk. 
Global approaches consider variation in a posterior functional of interest, $\psi = \int h(\theta) \pi(\theta | x) d \theta$, within a neighbourhood $\pi \in \Gamma$ centred around the prior model $\pi_0$.  A typical quantity would be the range $(\psi^{\inf}, \psi^{\sup})$
 where $\psi^{\inf} = \inf_{\pi \in \Gamma} \psi$ and $\psi^{\sup} = \sup_{\pi \in \Gamma} \psi$.
The challenge is to define the nature and size of $\Gamma$ so as to capture plausible ambiguity in $\pi_0$, while taking into account factors such as ease of specification and computational tractability, Berger (\citeyear{berger1994overview}; \citeyear{berger1985statistical} section 4.7). 
One important example is the  {\em{$\epsilon$-contamination}} neighbourhood \citep{berger1986robust} formed by the mixture model, 
$$
\Gamma = \{\pi = (1 - \epsilon) \pi_0 + \epsilon q, q \in \cal{Q}\} ,
$$
where $\epsilon$ is the perceived contamination error in $\pi_0$ and $\cal{Q}$ is a class of contaminant distributions. It is usual to restrict $\cal{Q}$ so that it is not ``too big'', for instance by including only uni-modal distributions \cite{berger1994overview}, for which it is shown that the solutions have tractable form. Other approaches consider frequentist risk, such as {\em{$\Gamma$-minimax}} that investigates the minimax Bayes (frequentist) risk of $\psi^{\sup}$ for $\pi \in \Gamma$ whereas {\em{conditional  $\Gamma$-minimax}} procedures \citep*{vidakovic2000gamma} study the maximum expected loss across prior distributions within $\Gamma$, this being perhaps closest to the approach we develop here.

One distinction between these approaches and this paper, is that we shall be concerned with robustness to misspecification on only those states $\theta$ that enter into the loss function $L_a(\theta)$. This facilitates application to high-dimensional problems for which specification of $\Gamma$ may be difficult \citep{Sivadiscussionberger} and helps tackle the thorny issue that changing the likelihood changes the interpretation of the prior  (\cite{ruggeri2005robust}, page 635). 

\subsection{Robust control, macroeconomics and finance}

Independent of the above developments in statistics, control theorists were investigating robustness to modelling assumptions. Control theory broadly concerns optimal intervention strategies (actions) on stochastic systems so as to maintain the process within a stable regime. Hence it is not surprising that decision stability is an important issue. When the system is linear with additive normal (white) noise the optimal intervention is well known \cite{whittle1990risk}. Robust control theory, principally developed by Whittle, considers the case when Nature is acting against the operator through stochastic buffering by non-independent noise, see  \cite{whittle1990risk}. Whittle established that under a malevolent Nature with a bounded variance an optimal intervention can be calculated using standard recursive algorithms. 

In Economics one early criticism of the Savage axioms was that the framework could not distinguish between different types of uncertainty. \cite{gilboa1989maxmin} developed a theory of maxmin Expected Utility in part to counter the famous Ellsberg paradox\footnote{The standard setting for the Ellsberg paradox is as follows: imagine two urns each containing 100 balls and every ball is either red or blue. One is told that the first urn (A) has 50 red balls and 50 blue balls exactly. No more information is given about the second urn (B). Suppose you win 100\$ if you pick a red ball, which urn would you choose? So there exists a set of alternatives which are equal in expected value (under any reasonable prior) but which appear to have different empirical preferences.}which extends standard Bayesian inference to a setting with multiple priors in the form of a closed convex set $\Gamma$. An action is then scored by its expected loss under the least favourable prior within that set. Their \citeyear{gilboa1989maxmin} paper formalises this and provides a solution to the Ellsberg paradox. When $\Gamma$ contains only one prior, we are back again in the usual Bayesian setting. The set $\Gamma$ can be seen as describing the decision-maker's aversion to uncertainty. This work is closely related to $\Gamma$\textit{-minimax} (for which the Ellsberg paradox is also used as a motivating example, see section 1 of \cite{vidakovic2000gamma}).

Again working in economics, Hansen and Sargent in a series of influential papers (e.g. \citeyear{hansen2001acknowledging}, \citeyear{hansen2001robust}), generalised ideas from \cite{whittle1990risk} and \cite{gilboa1989maxmin} motivated by problems in macroeconomic time series. They define a robust action as a local-minimax act within a Kullback-Leibler (KL) neighbourhood of $\pi_I(\theta)$ through exploration of, 
$$
\psi_{(a)}^{\sup}(C) = \sup_{\pi \in \Gamma_C} \E_{\pi}[L_a(\theta)]
$$
where $\Gamma_C$ denotes a KL ball of radius $C$ around $\pi_I$, 
$$
\Gamma_C = \{ \pi : \int \pi(\theta) \log \left( \frac{\pi(\theta)}{\pi_I(\theta)} \right) d\theta \le C \}.
$$
We will use $\pi^{\sup}_{a,C}$ to denote the corresponding local-minimax distribution, 
$$
\pi^{\sup}_{a,C} = \arg \sup_{\pi \in \Gamma_C} \E_{\pi}[L_a(\theta)].
$$
Figure \ref{GraphicalKLMinimax} shows a pictorial representation of this constrained minimax rule, where the reference distribution $\pi_I$ is a point in the space $\mathcal{M}$ of all distributions on $\theta$ (represented by the rectangle) and the least favourable distribution $\pi_{a,C}^{\sup}$ is contained within the neighbourhood $\Gamma_{C}$ (represented by the circle of radius $C$). The Wald minimax distribution is given by $\delta_{\theta_a^*}(\theta)$. Hansen and Sargent showed how $\pi_{a,C}^{\sup}$ and $\psi_{(a)}^{\sup}$ can be computed for dynamic linear systems with normal noise, see \cite{hansen2008robustness} for a thorough review and references. 

 \begin{figure}
\centering
\begin{tikzpicture}[scale=4.,cap=round,>=latex]
\def\x{0.65} 
	\draw (-1.5,-1.05) rectangle (1.5,1.05) ;
	\draw[thick] (0cm,0cm) circle(\x cm);
           \draw[gray] (0cm,0cm) -- (30:\x cm);
           \filldraw[black] (30:\x cm) circle(0.4pt);
           \filldraw[black] (45:1.4 cm) circle(0.4pt);
	\draw (42:1.27 cm) node[fill=white] {$\delta_{\theta_a^*}(\theta)$};
	\filldraw[black] (30:0cm) circle(0.4pt);
	\draw (30:.6*\x cm) node[fill=white] {$C$};
	\draw (326:1.6cm) node[fill=white] {$\mathcal{M}$};
	\draw (30:1.25*\x cm) node[fill=white] {$\pi^{\sup}_{a,C}$};
	\draw (110:0.1) node[fill=white] {$\pi_I$};
	\draw (330:0.815*\x cm) node[fill=white] {$\Gamma_{C}$};
\end{tikzpicture}
\caption{Graphical representation of local-minimax model $\pi^{\sup}_{a,C}$ within a Kullback-Leibler ball of radius $C$ around the reference model $\pi_I$, with global (Wald's) minimax density $\delta_{\theta_a^*}(\theta)$.}
\label{GraphicalKLMinimax}
\end{figure}
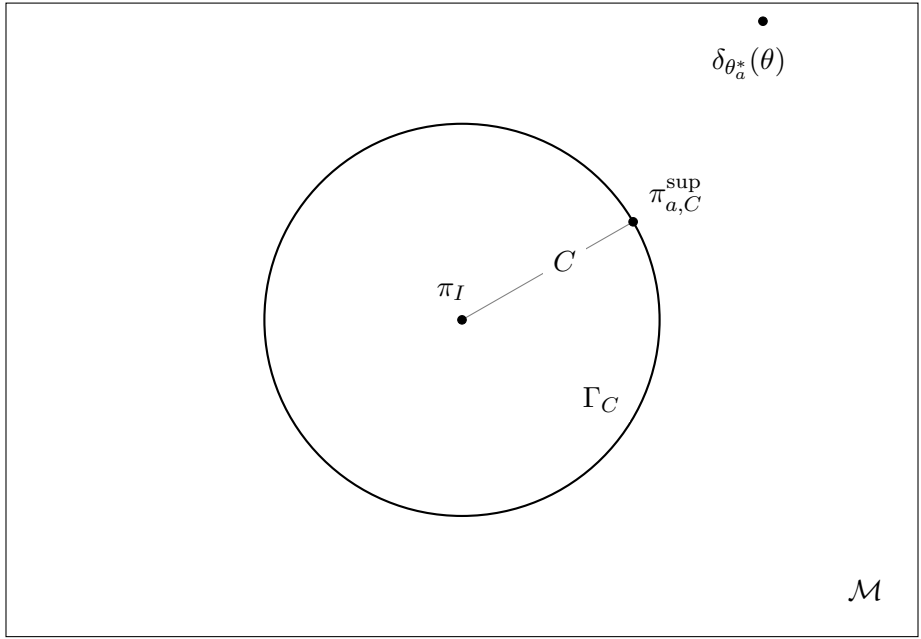

\citeauthor{breuer2013systematic} (\citeyear{breuer2013systematic}, \citeyear{breuer2013measuring}), building on the work of Hansen and Sargent, derived corresponding results for arbitrary probability measures $\pi_I(\theta)$. Under mild regularity conditions, and using results from exponential families and large deviation theory they obtain the exact form of $\pi^{\sup}_{a,C}$ for any $\pi_I(\theta)$ given the KL ball of size $C$, as well as an estimate for $\psi_{(a)}^{\sup}$, see also \citeauthor{ahmadi2012entropic} (\citeyear{ahmadi2011information}, \citeyear{ahmadi2012entropic}). In Section 4 we  derive the same result using an alternative, less general, but perhaps more intuitive proof. Before considering these formal methods we shall start with exploratory diagnostics and visualisation methods.

\section{{\em{D-open}} diagnostics}\label{diagnostics}

All good statistical data analysis begins with graphical exploration of information and evaluation of summary statistics before formal modelling takes place. In this section we consider some graphical displays to aid understanding of when and where actions are sensitive to modelling assumptions.  Section \ref{applications} further illustrates these ideas. Despite the importance of graphical statistics, there are few if any established tools for investigation of decision stability, in contrast to the multitude of methods for investigating model discrepancy and misspecification \citep*{belsley2005regression,gelman2007data,bayesvisualisation}.
Here we consider three graphical displays that concentrate on the relationship between the loss function $L(a,\theta)$ and `model' or posterior $\pi_I(\theta)$ for a given $a$. These are examples of how a set of available actions $a\in\A$ can be graphically compared. They could be displayed as a preliminary step to a formal analysis of sensitivity.

\subsection{Value at Risk (Quantile-Loss)}\label{VAR}

A primary tool for assessing the sensitivity with respect to misspecification of a functional of interest, is the distribution of loss, where $Z_a$ denotes the random loss variable under $\pi_I(\theta)$:
$$
F_{Z_a}(z) =  Pr(Z_a \le z) = \int_{\theta \in \Theta} I[L_a(\theta) \le z] \pi_I(d \theta) ,
$$
where $I[\cdot]$ is the identity function. We use notation $f_a(z) = F_{Z_a}(d z)$ to denote the corresponding density function and $F_{Z_a}^{-1}(q)$, for $q \in [0, 1]$, is the inverse cumulative distribution or quantile function. For a given $q\in [0,1]$ it is possible to characterise the value of an action $a$ by its quantile loss or \textit{Value at Risk} (VaR; terminology used in finance) $F_{Z_a}^{-1}(1-q)$. \cite{rostek2010} developed an axiomatic framework in which decision-makers can be uniquely characterised by a quantile $1- q$ and rational behaviour (optimal action) is defined as choosing $\hat{a} := \arg\min_{a\in\mathcal{A}} F_{Z_a}^{-1}(1-q)$. Note that this incorporates the minimax rule by taking $q=0$. The author argues that quantile maximisation is attractive to practitioners as its key characteristic is robustness, specifically to misspecification in the tails of the loss distribution. Although single quantiles discard much information contained in $[\pi_I(\theta),L_a(\theta)]$, plotting this function allows for immediate visualisation of how much of the tails are taken up by high loss (low utility) events. With a bag of samples simulated from the posterior marginal $\theta_1,..,\theta_m \sim \pi_I(\theta)$, this is easily approximated:
\begin{enumerate}
 \item Sort the realised loss values, $z_i^{(a)} = L_a(\theta_i)$, from highest to lowest $\{z^{(a)}_{\upsilon_a(1)} \ge z^{(a)}_{\upsilon_a(2)} \ge \ldots \ge z^{(a)}_{\upsilon(m)}\}$, where $\upsilon_a(\cdot)$ defines the sort mapping.
\item For $q\in[0,1]$, approximate $F_{Z_a}^{-1}(1-q)$ by linear interpolation of the points $\left(x=k/m, y = z^{(a)}_{\upsilon_a(k)}\right)$.
\end{enumerate}

\subsubsection{Coherent diagnostics}

In mathematical finance, summary statistics defined on loss distributions are known as \textit{risk measures}.  
VaR is a particularly controversial risk measure as it is widely used \citep[written into official regulations, see][]{basel:96}
but is not \textit{coherent}\footnote{Note that this is a different definition of coherence from Bayesian coherence, discussed in section \ref{bayescoherence}.} \citep{artzner:99}\footnote{A coherent risk measure $\rho$ has the following properties: \textit{translational invariance}: $\rho(Z+c) = \rho(Z) + c$, where $c$ is a constant; \textit{monotonicity}: if $Z$is stochastically dominated by $Y$, then $\rho(Z)\leq\rho(Y)$; \textit{positive homogeneity}: $\rho(\lambda Z) = \lambda\rho(Z)$, for $\lambda\geq 0$; \textit{subadditivity}: $\rho(Z+Y) \leq \rho(Z)+\rho(Y)$. By $\rho(Z+Y)$ we mean the risk measure on the combined loss distribution of the combination of the two actions corresponding to the loss distributions $Z$ and $Y$.}, violating subadditivity\footnote{Subadditivity corresponds to investors decreasing risk by diversifying portfolios.}. This has motivated the use of different diagnostics such as CVaR (see next section). We note that expected loss and minimax are both coherent diagnostics. However, the use of coherence in Bayesian decision theory does not seem appropriate in general as the subadditivity axiom does not hold in many applications\footnote{A trivial example is the following: Consider the two gambles, $A$ lose $\pounds10$ if a fair coin falls on Heads; $B$ lose  $\pounds10$ if a fair coin falls on Tails. For the same coin, if both gambles are taken then one loses  $\pounds10$ with probability 1.}.

\subsection{Conditional value at risk (upper-trimmed mean utility)}\label{CVAR}

\textit{Conditional Value at Risk} \citep*[CVaR,][]{rockafellar2000optimization} is another popular alternative to expected loss (or utility), initially motivated by concerns of coherence. To a statistician it represents the lower trimmed mean of loss (or upper trimmed mean of utility), 
$$
{G}_a(q) = \frac{1}{q} \int_{F_{Z_a}^{-1}(1-q)}^{\infty} z f_a(z) d z .
$$
This gives the expected value of an action conditional on the event ($\theta$) occurring above a quantile of loss (lowest of utility). $q$ can be seen as regulating the amount of pessimism towards Nature, with $\lim_{q\rightarrow0} \sup_a {G}_a(q)$ corresponding to the minimax rule. 

Another strategy for taking in to account model misspecification is by considering the two-sided {\em{trimmed expected loss}}, defined as:
$$
H({q}) = \frac{1}{1- q} \int_{F^{-1}_{Z_a}(q/2)}^{F^{-1}_{Z_a}(1- q/2)} z f_a(z) dz
$$
This is a robust measure of expected loss formed by discarding events with highest and lowest loss. 
Both these statistics are easily approximated using the bag of samples $\{\theta_i\}_{i=1}^n$ and the sort mapping $\upsilon$ defined previously. We use the linear interpolation,
\[ \hat{G}_a(\frac{k}{m}) = \frac{1}{k} \sum_{i=1}^k L_a(\theta_{\upsilon(i)}), ~~~ k=0,..,m\]

For a set of actions $\mathcal{A}$, it is possible to quantify the stability of the optimal action $a^*$ evaluated under expected loss, by observing the first CVaR crossing point. That is to say the first value $q\in[0,1]$ such that $a^*$ is no longer optimal, evaluated under CVaR$(q)$.

\subsection{Cumulative Expected Loss}\label{CER}

The {\em{Cumulative Expected Loss}} (CEL) function for action $a$, defined as, 
$$
J_a(q) = \int_{F_{Z_a}^{-1}(1-q)}^{\infty} z f_a(z) d z .
$$
for  $q \in [0,1]$. The CEL-plot is a monotone decreasing function $J_a(q) $ and an informative graph for highlighting decision sensitivity. The overall shape of $J_a(q)$ provides a qualitative description of decision sensitivity. An action with CEL-plot that is steeply rising as $q \to 0$ is `heavily downside' (see for example figure \ref{BClossdiagnostics} in section \ref{breastcancer}), with expected-loss driven by low-probability high loss outcomes, while CEL-plot rising at 1 indicates `heavy upside'. In particular $J_a(q)$ has a number of useful features:
\begin{itemize}

\item $J_a(q)$ quantifies the contribution to the expected loss of action $a$, from the $ 100  \times (1- q) \%$ set of outcomes with greatest loss. 

\item $J_a(1) = \E_{\pi_I(\theta)}[L_a(\theta)]$, is the expected loss of action $a$, and $\hat{a} = \arg \max_{a \in A} J_a(1)$ is the optimal Savage action.

\item $J_a'(q) =  \inf_{z^* \in\mathbb{R}^+} \left\{ Pr(Z_a \le z^*) = 1 -q \right\}$, 
the gradient of the curve at $J_a(q)$ gives the threshold loss value $z^*$, such that under action $a$ we can expect with probability $(1 -q)$ the outcome to have loss less than or equal to $z^*$.This is the ``value-at-risk'' of action $a$ outlined above, e.g.  \cite{pritsker1997evaluating}.

\item $J_a'(0) = \sup_{\theta \in \Theta} L_a(\theta)$, and hence the Wald minimax action is given by: $\tilde{a} = \arg \min_{a \in A} J_a'(0)$ (the action with steepest gradient as $q \to 0$).
\end{itemize}

\subsection{Sensitivity diagnostics}

If the expected loss estimates appear to be sensitive to the model specification it is useful to know which element of the model is sensitive. Here we propose a simple method and graphical display to estimate the sensitivity with respect to individual data points (likelihood) and/or the prior distribution.
Again, let the model be represented by a bag of Monte Carlo samples $\theta_1,..,\theta_m\sim_{iid}\pi_I$. Thus, in a parametric model, $\pi_I(\theta_i) \propto \Pi_{j=1}^n f(x_j | \theta_i) \pi(\theta_i) $ for data $x_j$, likelihood $f$ and prior $\pi$.

We propose a simple importance sampling method for evaluating $\pi_{I-\{x_j\}}$ and $\pi_{I-\pi}$, denoting respectively the posterior without the datum $x_j$ and the posterior without the prior $\pi$ (the posterior with a flat prior).
The importance weights are given by:
\[ w_i^{-x_j} = \frac{1}{f(x_j | \theta_i)}, \quad w_i^{-\pi} = \frac{1}{\pi(\theta_i)} \]

These weights give leave-one-out (LOO) estimates of the expected loss, where the prior can be considered as one extra data point:
\[ \psi_a^{-x_j} = \frac{1}{\sum_i w_i^{-x_j}} \sum_i w_i^{-x_j} L_a(\theta_i) \]

\[ \psi_a^{-\pi} = \frac{1}{\sum_i w_i^{-\pi}} \sum_i w_i^{-\pi} L_a(\theta_i) \]

Thus the effect of single data points can be evaluated (detection of outliers) as can the effect of the prior, which is especially useful in small sample situations.

We also propose plotting the loss values $L_a(\theta_i)$ against the density estimates $\pi_I(\theta_i)$ (up to a normalising constant). This will highlight situations where the high loss samples are coming from the tails of the distribution $\pi_I$.

\subsection{Motivating synthetic example}\label{syntheticapp}

\begin{figure}
\centering
\includegraphics[scale=.45]{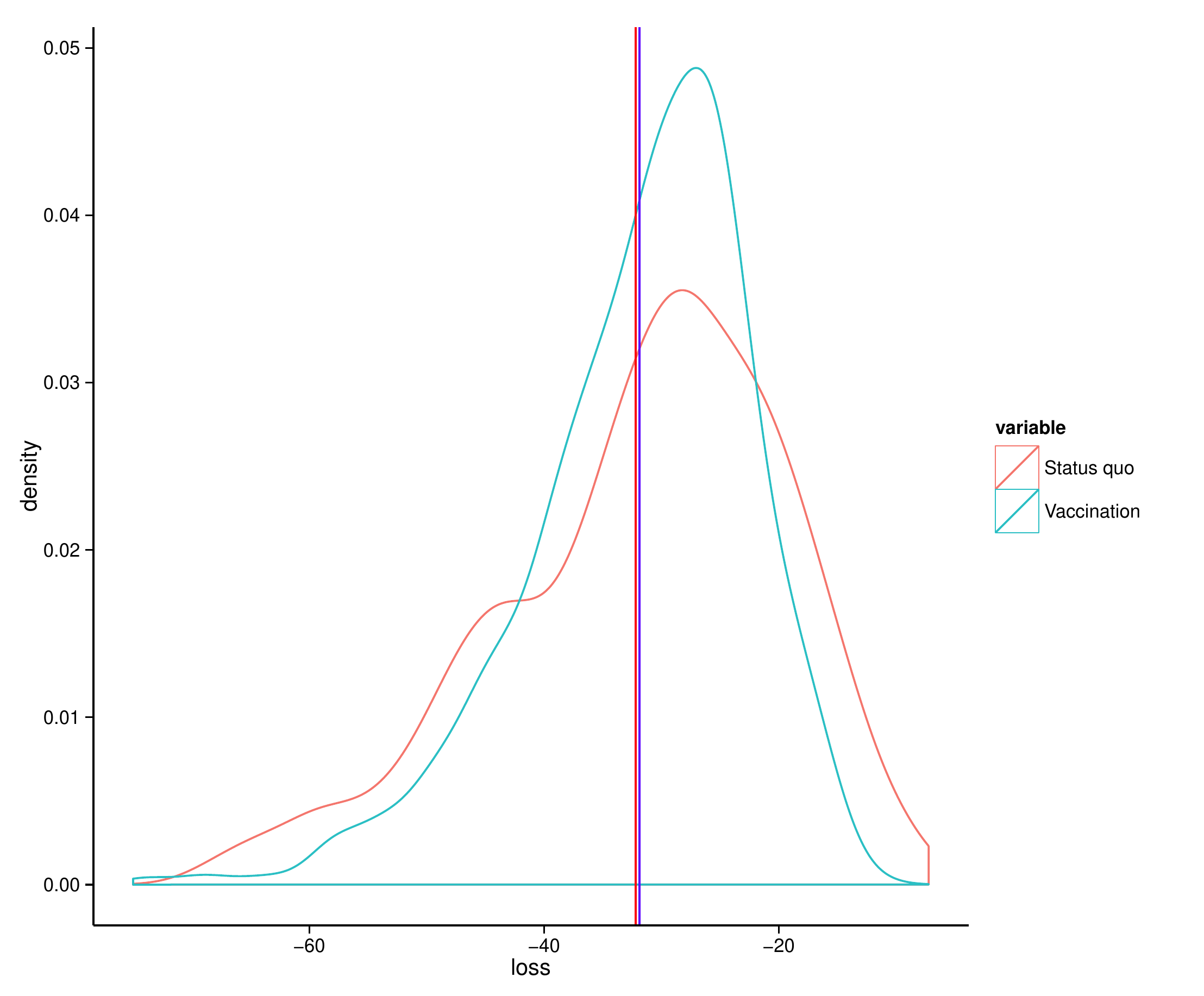}
\caption{Loss distributions of the two actions with mean values shown by vertical lines.}
\label{BCEAlossdists}
\end{figure}
We use an example from the medical decision-making literature to illustrate these three diagnostic plots. Consider a certain infectious disease for which there exists treatment medication and a new vaccine. The problem is whether the vaccination should be publicly funded, or whether the status-quo should remain in place, whereby patients visit their doctor and are prescribed over the counter drugs. This is a standard setting for decisions made by institutions such as NICE\footnote{National Institute for Clinical Excellence.} in the UK, for example. We use a fictitious example of such a decision process taken from \cite{baio2011probabilistic}. The goal is compare the two available actions: widespread vaccination or status quo. The modelling must take into account the uncertainty with regards to the efficacy of the vaccine and its coverage were it to be implemented. With regards the status quo action, the modelling has to consider the number of visits to the GP\footnote{General Practitioner}, complications from the drugs which could lead to extra visits, possible hospitalisation and even death. Each of these outcomes has either a monetary cost (visit to the GP for example) or a utility measured in Quality Adjusted Life Years (QALYs). Therefore to assign a loss value to each action, it is necessary to choose a conversion rate $k$, known as the 'willingness to pay' parameter, exchanging QALYs into pounds. Most of the decision literature focusses on the sensitivity of the decision system with respect to the specification of $k$. The R package BCEA (Bayesian Cost-Effectiveness Analysis) developed by Gianluca Baio implements the model presented in \cite{baio2011probabilistic} and performs a sensitivity analysis around the parameter $k$. The exact details of the model are not of particular interest so we do not expose them here, our main purpose being illustrative.
The model used for this setting has 28 parameters, each with informative prior distributions, these are given in table 1 of \cite{baio2011probabilistic}. MCMC is used to estimate a posterior distribution, all the relevant details can be found in the package documentation, such as the cost function used etc. We ran the model given in BCEA using the default settings. 
We note that all the graphs were produced using our package \textit{decisionSensitivityR} and all the code can be found in its documentation.
In figure \ref{BCEAlossdists} we plot the loss distributions for the two actions with the expected posterior loss values given by the vertical lines. The status quo (red) has lower expected loss but has greater variance than vaccination (blue). The expected loss values are very close together, which would suggest sensitivity to model perturbations.
\begin{figure}
\centering
\includegraphics[scale=.45]{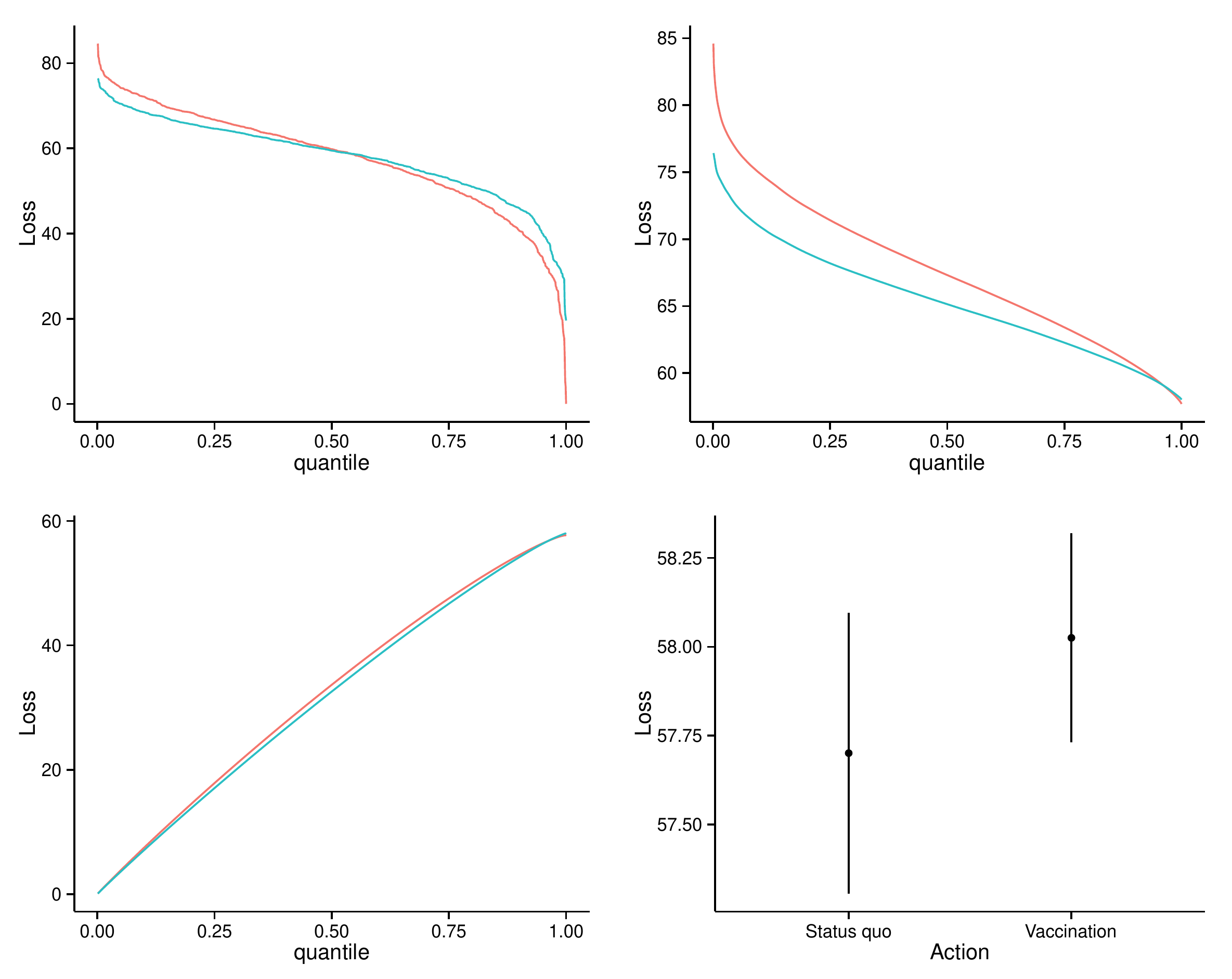}
\caption{Diagnostic plots for the decision system given in \cite{baio2011probabilistic} comparing vaccination (blue) to status quo (red). The 'willingness to pay' parameter set to 21000. From top left to bottom right: Inverse loss distribution; Conditional value at risk; Cumulative expected loss; Expected loss centred at two intervals of standard deviation.}
\label{BCEAdiagnostics}
\end{figure}

Figure \ref{BCEAdiagnostics} illustrates the three diagnostic plots presented above for this application. We see from the inverse loss distribution (top left) that status quo (red) has higher downside loss than vaccination (blue). The CVaR plot (top right) clearer distinguishes the two actions because of this higher downside loss. In this example, the CEL is not informative, but this is context dependent, see figure \ref{BClossdiagnostics} from the breast cancer screening application in section \ref{breastcancer}.

The diagnostic plots and summary statistics presented in this section allow for visualisation of dependencies between the loss function and posterior distribution, which can highlight the impact of model misspecification on decision making.
We now look at perturbations to the model (posterior distribution) in order to measure the sensitivity of the expected loss quantities. 

\section{{\em{D-open}} formal methods}\label{formalmethods}

A natural approach to ex-post model misspecification is via the construction of a neighbourhood of `close' models. This allows for either a study of the variation of the quantity of interest (expected loss) $\psi_{{a}}$ over all models in this neighbourhood, or can guide the construction of a nonparametric extension of the model. In this section, we explore both ideas, each providing the statistician with a different set of tools to estimate the sensitivity of the decision system.

For ease of comprehension, all the notation used throughout this paper is summarised in a glossary in Appendix \ref{app:glossary}. 

\subsection{Kullback-Leibler neighbourhoods}\label{KLanalytic}

To investigate formally the stability of decisions to model misspecification we suggest following an approach taken in \cite{hansen2001robust} and study the variation of expected loss $\psi_{(a)}$ over models within a KL ball $\Gamma$ around the marginal posterior density, $\pi_I(\theta)$, of the approximating model on the states that enter into the loss function. We will assume after linear transformation that the loss can be bounded, a reasonable assumption for almost all applied problems\footnote{In practice it is always possible to cap the loss. For instance, any model by which $\theta$ is simulated using MCMC this assumption is made. In finance, the potential losses incurred by any individual or organisation could be bounded by an arbitrarily large number, say $\pm$GDP of the US.}.

\subsubsection{Properties}\label{KLproperties}

\paragraph{Analytical form of least favourable distribution}
It is well known that the KL divergence is not symmetric, in general $\KL(\pi^*|| \pi) \ne \KL(\pi || \pi^*)$ for $\pi^* \ne \pi$, and following others we consider the neighbourhood $\Gamma_C =\{\pi : \KL (\pi || \pi_I) \le C \}$.
This might be considered the more natural setting as here the KL divergence represents the expected self-information log-loss in using an approximate model $\pi_I$ when Nature is providing outcomes according to the probability law $\pi$. 
The alternative neighbourhood is considered in the Appendix \ref{KLreverse}. However in the Monte Carlo setting where $\pi_I$ is represented by a set of $\{\theta_i\}_{i=1}^m$ each with weight $1/m$, then any distribution $\pi$ that is a reweighing of $\theta_i$'s satisfies: $\KL(\pi_I || \pi) \ge \KL(\pi || \pi_I)$. Because we are looking at the variation of the expected loss $\psi_{(a)}$ across the $\Gamma_C$, we want to the neighbourhood to be more exclusive for fixed values of the radius $C$.

Surprisingly this situation leads to a least favourable distribution solution with a simple form.
\begin{theorem}\label{theorem:1}
\label{minimax}
Let $\pi^{\sup}_{a,C} = \arg\sup_{\pi \in \Gamma_C} \E_{\pi}[L_a(\theta)]$, with $\Gamma_C = \{\pi : \KL(\pi \parallel \pi_I) \le C\}$ for $C \ge 0$.  
Then the solution $\pi^{\sup}_{a,C}$ is unique and has the following form, 
\begin{equation}
\label{eq:pi_sup}
\pi^{\sup}_{a,C} = Z^{-1}_C \pi_I(\theta) \exp[\lambda_a(C) L_a(\theta)]
\end{equation}
where $Z_C = \int  \pi_I(\theta) \exp[\lambda_a(C) L_a(\theta)] d\theta$ is the normalising constant or partition function, for which we assume $Z_C < \infty$, and $\lambda_a(C)$ is a non-negative real valued monotone function.
\end{theorem}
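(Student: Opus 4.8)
The plan is to treat this as maximising the linear functional $\pi \mapsto \E_{\pi}[L_a(\theta)]$ over the convex set $\Gamma_C$ and to solve it by Lagrangian duality, the engine being the Gibbs variational principle (Donsker--Varadhan). First I would argue that the KL constraint is active at the optimum: unless $L_a$ is constant $\pi_I$-almost everywhere (a degenerate case treated separately), shifting probability mass toward high-loss regions strictly increases the objective while strictly increasing $\KL(\pi \parallel \pi_I)$, so the supremum is sought on the boundary $\{\pi : \KL(\pi \parallel \pi_I) = C\}$. Introducing a nonnegative multiplier for the KL constraint and one for the normalisation $\int \pi = 1$, the first-order (Gateaux derivative) stationarity condition forces $\log(\pi/\pi_I)$ to be an affine function of $L_a(\theta)$, i.e. $\pi(\theta) \propto \pi_I(\theta)\exp[\lambda_a L_a(\theta)]$ for some $\lambda_a \ge 0$ (the reciprocal of the KL multiplier), with all remaining constants absorbed into the partition function $Z_C$. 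Boundedness of $L_a$ guarantees $Z_C < \infty$ and ensures the nonnegativity of $\pi$ is automatic.

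Rather than rely on the formal variational calculation for rigour, I would verify optimality directly through the Donsker--Varadhan inequality: for every $\lambda \ge 0$ and every $\pi$,
\[ \lambda\,\E_{\pi}[L_a(\theta)] - \KL(\pi \parallel \pi_I) \le \log \int \pi_I(\theta)\exp[\lambda L_a(\theta)]\, \diff\theta =: \psi(\lambda), \]
with equality if and only if $\pi$ equals the tilted measure $\pi_\lambda \propto \pi_I \exp[\lambda L_a]$. For any feasible $\pi \in \Gamma_C$ this yields $\E_{\pi}[L_a] \le \lambda^{-1}(C + \psi(\lambda))$. If $\lambda$ is chosen so that the tilted measure itself satisfies $\KL(\pi_\lambda \parallel \pi_I) = C$, then $\pi_\lambda$ is feasible and attains this upper bound, hence is a maximiser of exactly the form (\ref{eq:pi_sup}). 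This route simultaneously exhibits the optimiser (sidestepping any compactness/existence argument) and, through the equality case of Donsker--Varadhan, delivers uniqueness: any maximiser must force equality in the inequality above and therefore coincide with $\pi_\lambda$.

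It then remains to show that such a $\lambda = \lambda_a(C)$ exists, is unique, and is a nonnegative monotone function of $C$. Here I would use that $\psi$ is the cumulant generating function of $L_a$ under $\pi_I$, so $\psi'(\lambda) = \E_{\pi_\lambda}[L_a]$ and $\psi''(\lambda) = \mathrm{Var}_{\pi_\lambda}(L_a) \ge 0$. Writing the KL along the tilted family as $k(\lambda) := \KL(\pi_\lambda \parallel \pi_I) = \lambda\psi'(\lambda) - \psi(\lambda)$ gives $k'(\lambda) = \lambda\psi''(\lambda) \ge 0$ with $k(0) = 0$. Thus $k$ is continuous and nondecreasing (strictly increasing whenever $L_a$ is nondegenerate), so $k$ is invertible on its attainable range and $\lambda_a(C) = k^{-1}(C)$ is well defined, nonnegative, and monotone increasing in $C$, as claimed.

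The main obstacle is the rigour of the boundary/existence step. The direct Donsker--Varadhan route largely circumvents a separate existence proof, but one must still confirm that the attainable range of $k(\lambda)$ as $\lambda$ ranges over $[0,\infty)$ covers the prescribed $C$; with $L_a$ bounded this range is $[0, C_{\max})$ for some $C_{\max} \le \infty$, and a little care is needed for radii $C$ beyond $C_{\max}$, where the supremum would be approached by measures degenerating onto the essential supremum of $L_a$ and the tilted form is recovered only in a limiting sense. The degenerate case $\mathrm{Var}_{\pi_I}(L_a) = 0$, in which $L_a$ is constant and every $\pi \in \Gamma_C$ is trivially optimal, should also be noted and excluded from the uniqueness claim.
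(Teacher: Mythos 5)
Your proposal is correct, and it takes a genuinely more self-contained route than the paper's. The paper's proof rests on the same Lagrangian idea but imports the crucial step as a black box: it asserts, citing Hansen and Sargent, that the constrained problem is equivalent to the penalised problem $\arg\inf_{\pi}\left\{ \E_{\pi}[-L_a(\theta)] + \eta_a^{-1}\KL(\pi \parallel \pi_I)\right\}$, solves the latter by absorbing the linear term into a single KL divergence against the tilted measure (which is exactly the equality direction of your Donsker--Varadhan inequality), and settles uniqueness by appeal to convexity of KL; it never verifies that the penalised solution solves the original constrained problem, nor that a suitable $\eta_a(C)$ exists, is nonnegative and is monotone --- these properties are simply asserted. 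Your argument supplies precisely these missing pieces: the Donsker--Varadhan bound gives a primal certificate $\E_{\pi}[L_a] \le \lambda^{-1}(C + \psi(\lambda))$ valid for every feasible $\pi$, attained (uniquely, via the equality case) by the tilted measure once $\lambda$ is chosen so that $k(\lambda) := \KL(\pi_\lambda \parallel \pi_I) = C$, and the cumulant-generating-function computation $k(0)=0$, $k'(\lambda) = \lambda\,\mathrm{Var}_{\pi_\lambda}(L_a) \ge 0$ establishes existence, nonnegativity and monotonicity of $\lambda_a(C) = k^{-1}(C)$, the part of the theorem statement the paper's proof never addresses. (The identity $k(\lambda)=\lambda\psi'(\lambda)-\psi(\lambda)$ is the same computation the paper carries out only after the proof, when substituting the solution back into the KL divergence.) You also correctly flag two caveats glossed over by both the statement and the paper's proof: uniqueness fails when $L_a$ is $\pi_I$-a.s.\ constant, and for $C$ beyond the attainable range of $k$ the constraint is no longer active and the maximiser is only a degenerate limit of tilted measures. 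In short, the paper's route buys brevity and a direct pointer to the robust-control literature; yours buys rigour, a constructive handle on $\lambda_a(C)$, and an honest account of when the stated form can fail.
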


\begin{proof}
The function minimisation problem, $\pi^{\sup}_{a,C} = \arg\max_{\pi \in \Gamma_C} \E_{\pi}[L_a(\theta)]$, has an unconstrained Lagrange dual form, see for example \cite{hansen2006robust} (pages 58-60), 
$$
\pi^{\sup}_{a,C} = \arg\inf_{\pi\in\mathcal{M}}\left\{ \E_{\pi}[-L_a(\theta)] + \eta_a^{-1} \KL(\pi \parallel \pi_I) \right\}
$$
for some $\eta_a =\eta_a(C)$ is a penalisation parameter with $\eta_a \in [0, \infty)$, and is monotone increasing in $C$. Hence, 
\begin{eqnarray}
\pi^{\sup}_{a,C} & = & \arg\inf_{\pi\in\mathcal{M}} \left\{ \int -L_a(\theta) \pi(\theta) d \theta + \eta_a^{-1} \int \pi(\theta) \log \left(\frac{\pi(\theta)}{\pi_I(\theta)}\right) d\theta \right\} \nonumber \\
& = & \arg\inf_{\pi\in\mathcal{M}} \left\{ \int \pi(\theta) \log \left(\frac{\pi(\theta)}{\pi_I(\theta) \exp[\eta_a L_a(\theta)]}\right) d\theta \right\} \nonumber \\
& \propto &  \pi_I(\theta) \exp[\eta_a L_a(\theta)] 
\end{eqnarray}
The uniqueness arises from the convexity of the KL loss. The result follows, taking $\lambda_a(C) = \eta_a$.
\end{proof}

By a similar argument the distribution of minimum expected loss follows:
\[\pi^{\inf}_{a,C} \propto \pi_I(\theta) \exp[-\lambda(C) L_a(\theta)]\]
Note by assuming bounded loss functions we can ensure the integrability of the densities. \cite{breuer2013systematic} and  \cite{ahmadi2012entropic} derive the same result more generally but perhaps less intuitively. \cite{breuer2013systematic} gives more general conditions on when the solution exists.

The $\Gamma_C$ least favourable distributions, $\{\pi^{\inf}_{a,C}, \pi^{\sup}_{a,C}\}$, have an interpretable form as exponentially tilted densities, tilted toward the exponentiated loss function, with weighting $\lambda_a(C)$ a monotone function of the neighbourhood size $C$. For linear loss, $L_a(\theta) = A\theta$, the local least favourable $\pi^{\sup}_{a,C}$ is the well known Esscher Transform used for option pricing in actuarial science. The tilting parameter $\lambda_a(C)$ is a function of the neighbourhood size $C$, but we will write $\lambda_a$ for convenience. $\lambda_a$ and $C$ can be thought of as interchangeable, as there is a bijective mapping between $C\geq 0$ and $\lambda_a\geq 0$, although this is not a linear mapping.

Following Theorem \ref{theorem:1}, the corresponding range of expected losses for each action $(\psi^{\inf}_{(a)}, \psi^{\sup}_{(a)})$ can then be plotted as a function of $C$ for each potential action. Formally we should write 
 $[\psi^{\inf}_{(a)}(C), \psi^{\sup}_{(a)}(C)]$  although for ease of notation we will often suppress $C$ from the expression unless clarity dictates. The constraint $\KL(\pi \parallel \pi_I) \le C$ will result in $\pi_{a,C}^{\sup}$ lodging on the boundary as the expected loss can always be increased by diverging toward $\delta_{\theta^*_a}(\theta)$ for any distribution with $\KL(\pi \parallel \pi_I) < C$. Substituting the solution (\ref{eq:pi_sup}) into the KL divergence function gives,   
\begin{eqnarray*}
\KL(\pi^{\sup}_{a,C} \parallel \pi_I) 
  & = & \int \pi^{\sup}_{a,C}(\theta) \log \left(Z_{\lambda_a}^{-1} \exp[\lambda_a L_a(\theta)] \right)  d\theta \\
    & = & \lambda_a \E_{\pi^{\sup}_{a,C}}[L_a(\theta)]  - \log Z_{\lambda_a}
 \end{eqnarray*}
So, given neighbourhood size $C$, the KL divergence $\KL(\pi^{\sup}_{a,C} \parallel \pi_I)$ is $ \lambda_a(C)$ times the expected loss under $\pi^{\sup}_{a,C}$ minus the log partition function. Moreover, by Jensen's inequality,
\begin{eqnarray*}
\KL(\pi^{\sup}_{a,C} \parallel \pi_I) & = & \lambda_a \E_{\pi^{\sup}_{a,C}}[L_a(\theta)]  - \log \E_{\pi_I}[\exp(\lambda_a L_a(\theta)] \\
& \le & \lambda_a \left[ \E_{\pi^{\sup}_{a,C}}[L_a(\theta)] - \E_{\pi_I}[L_a(\theta)] \right]
 \end{eqnarray*}
The KL divergence is bounded above by $\lambda_a$ times the difference in expected loss between the approximating and the contained minimax models.

By plotting out the interval $[\psi^{\inf}_{(a)}(C), \psi^{\sup}_{(a)}(C)]$ for each action as a function of KL divergence constraint $C : \KL(\pi \parallel \pi_I) \le C$ we can look for crossing points between the supremum loss $\psi^{\sup}_{(\hat{a})}$ under the optimal action $\hat{a}$ chosen by the approximating model and the infimum loss under all other actions, $\psi^{\inf} := \inf_{a \in A \setminus \hat{a}} \{ \psi^{\inf}_{(a)} \}$

\paragraph{Bayesian coherence}\label{bayescoherence}

Adapting results from \citet{Bissiri2013}, we are are able to state the following result regarding the uniqueness of Kullback-Leibler divergence under the condition of guaranteeing coherent Bayesian updating.
\begin{theorem}\label{theorem:coherence}
Let $\pi^{\sup}_{a,C}(x, \pi_I)$ be the solution obtained by 
$$
\pi^{\sup}_{a,C}(x, \pi_I) = \arg\inf_{\pi\in\mathcal{M}} \left\{ \E_{\pi}[-L_a(\theta)] + \eta_a^{-1} D(\pi \parallel \pi_I) \right\}
$$
with data $x = \{ x_i\}_{i=1}^n$, a centring distribution $\pi_I$, and arbitrary $g$-divergence measure $D$.
Moreover, let $x$ be partitioned as $x = \{x^{(1)}, x^{(2)}\}$, for $x^{(1)} = \{x_i\}_{i\in S}$, $x^{(2)} = \{ x_j\}_{j\in \bar{S}}$, where $S,\bar{S}$ is any partition of the indices $i=1,..,n$. 
For coherence we require,
$$
\pi^{\sup}_{a,C}(x, \pi_I)  \equiv \pi^{\sup}_{a,C}\left(x^{(2)}, \pi^{\sup}_{a,C}(x^{(1)}, \pi_I)\right)
$$
That is, the solution using a partial update involving $x^{(1)}$, which is subsequently updated with $x^{(2)}$, should coincide with the solution obtained using all of the data, for any partition. Then for coherence  $D(\cdot || \cdot)$ is the Kullback-Leibler divergence.
\end{theorem}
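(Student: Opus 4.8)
The plan is to treat this as a restatement, in the present notation, of the coherence characterisation of \citet{Bissiri2013}, and to reduce the coherence identity to a functional equation whose only admissible solution forces $D$ to be the Kullback--Leibler divergence. First I would make the additive structure induced by the partition explicit: writing $L^{(1)} = L_a(\theta, x^{(1)})$ and $L^{(2)} = L_a(\theta, x^{(2)})$, the loss attached to the full data set splits as $L_a(\theta,x) = L^{(1)} + L^{(2)}$. Defining the update operator $U_D(g,\nu) = \arg\inf_{\mu\in\M}\{\E_\mu[-g] + \eta_a^{-1} D(\mu\parallel\nu)\}$, the required coherence is exactly the associativity statement $U_D(L^{(1)}+L^{(2)},\pi_I) = U_D(L^{(2)}, U_D(L^{(1)},\pi_I))$.

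The sufficiency direction is immediate from Theorem~\ref{theorem:1}. For $D=\KL$ we have $U_{\KL}(g,\nu)\propto \nu\,\exp(\eta_a g)$, so substituting the inner update into the outer one and collecting exponentials gives $U_{\KL}(L^{(2)},U_{\KL}(L^{(1)},\pi_I)) \propto \pi_I\exp(\eta_a L^{(1)})\exp(\eta_a L^{(2)}) = \pi_I\exp(\eta_a(L^{(1)}+L^{(2)}))$, which is $U_{\KL}(L^{(1)}+L^{(2)},\pi_I)$; the normalising constants reconcile automatically. Thus KL is coherent, and the content of the theorem is the converse.

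For necessity I would write the $g$-divergence as $D(\mu\parallel\nu)=\int\phi(\diff\mu/\diff\nu)\,\diff\nu$ with $\phi$ convex, form the Lagrangian for the minimisation under the constraint $\int\diff\mu=1$, and take its first variation. The Euler--Lagrange condition is $\phi'(\diff\mu^*/\diff\nu)=\eta_a g + c$ for a normalising multiplier $c$, hence $\diff\mu^*/\diff\nu = h(\eta_a g + c)$ with $h:=(\phi')^{-1}$. Applying this to each of the three updates in the associativity identity and cancelling $\pi_I$ yields, pointwise in $\theta$, the relation $h(\eta_a L^{(1)}+c_1)\,h(\eta_a L^{(2)}+c_2) = h(\eta_a(L^{(1)}+L^{(2)})+c)$, where $c_1,c_2,c$ are the three multipliers. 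Regarding $a=\eta_a L^{(1)}(\theta)$ and $b=\eta_a L^{(2)}(\theta)$ as free coordinates, this is a Pexider multiplicative equation $F(a)G(b)=K(a+b)$; under the monotonicity of $h$ furnished by convexity of $\phi$, its only solutions are exponentials $h(s)=\alpha e^{\beta s}$. Inverting, $\phi'(t)$ is affine in $\log t$, so $\phi(t)\propto t\log t$ modulo affine terms that leave the divergence unchanged once $\int\diff\mu=\int\diff\nu=1$, and therefore $D$ is the Kullback--Leibler divergence.

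The step I expect to be the main obstacle is the last one: justifying that the pair $(L^{(1)}(\theta),L^{(2)}(\theta))$ sweeps out a genuinely two-dimensional region as the batches and the point $\theta$ vary --- so that the functional equation is forced to hold on an open set rather than a lower-dimensional slice --- while simultaneously controlling the data-dependent multipliers $c_1,c_2,c$ so that the reduction to a Pexider equation is clean. I would lean on the richness of the admissible loss increments assumed in \citet{Bissiri2013}, together with the regularity needed to license the variational step (differentiability of $\phi$, interior optima, and the finiteness of the partition function already assumed in Theorem~\ref{theorem:1}) and to exclude non-measurable pathological solutions of the functional equation.
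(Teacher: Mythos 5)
Your proposal proves the right statement but by a genuinely different route from the paper. The paper's proof (Appendix A, adapted from \citealp{Bissiri2013}) never performs a variational argument on general densities: it reduces immediately to a two-point parameter space, $\pi = p_0\delta_{\theta_1} + (1-p_0)\delta_{\theta_2}$, so that every update is a scalar optimisation in $p$. Stationarity in $p$ makes the normalisation multiplier disappear automatically (it is differenced away across the two atoms), producing identities of the form $g'(p_1/p_0) - g'\bigl(\tfrac{1-p_1}{1-p_0}\bigr) = L_{I_1}(\theta_2) - L_{I_1}(\theta_1)$. Additivity of the loss across the two data batches then yields a functional equation that must hold for \emph{all} $(p_0,p_1,p_2)\in(0,1)^3$ --- the richness comes from the arbitrariness of the loss functions, not from sweeping $\theta$ --- and letting $p_0\to 0$ collapses it to Cauchy's equation $g'(xy) = g'(x) + g'(y) - g'(1)$, whence $g'(x) = k\ln x + g'(1)$ and $D$ is KL up to affine terms. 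Your argument is the multiplicative dual of this: where the paper gets an additive Cauchy equation for $g'$, you get a multiplicative Pexider equation for $h = (\phi')^{-1}$; the conclusions are equivalent. Your sufficiency paragraph is correct but is not part of the paper's proof (it is implicit in Theorem \ref{theorem:1}).

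The obstacle you flag at the end is real, and it is exactly what the paper's two-point device is designed to dissolve. In your formulation, for a \emph{fixed} pair of batches the multipliers $c_1,c_2,c$ are fixed constants, so they can simply be absorbed into the unknown functions $F,G,K$ of the Pexider equation; the genuine requirement is then that $\bigl(L^{(1)}(\theta),L^{(2)}(\theta)\bigr)$ covers an open subset of $\Re^2$ as $\theta$ varies, which needs $\Theta$ to be rich (uncountable, with sufficiently free losses). This makes your argument fail outright when $\Theta$ is finite --- in particular in the two-point case, where the image of $\theta\mapsto\bigl(L^{(1)}(\theta),L^{(2)}(\theta)\bigr)$ is just two points --- and that is precisely the only case the paper needs, since pinning down $g$ on two-point distributions already determines the divergence. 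If you want to repair your route rather than adopt the paper's, the fix is the paper's differencing trick: subtract the Euler--Lagrange condition at two points of $\Theta$ to eliminate the multiplier, which converts your Pexider equation back into an additive equation in $\phi'$ whose free variables are supplied by varying the losses rather than $\theta$. Two smaller points: your appeal to ``monotonicity of $h$'' needs \emph{strict} convexity of $\phi$ for $h=(\phi')^{-1}$ to be single-valued (the paper handles the degenerate case separately by ruling out $k=0$), and the interiority of the optima that licenses the first-order conditions is not free --- the paper imports it from Lemma 2 of Bissiri and Walker rather than assuming it.
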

The proof is given in Appendix \ref{app:coherence}. \qed

This theorem shows that KL is the only divergence measure to provide coherent updating of the local least favourable distribution.

\paragraph{Local sensitivity:} 
Although the framework presented here fits into \textit{global robustness} methods, it is also possible to use it to extract \textit{local robustness} measures. 
In appendix \ref{localsens} we show that the derivative at zero of least favourable expected loss w.r.t. $\lambda$ (exponential tilting parameter) is exactly the variance of the loss distribution.
This justifies the use of the variance of loss as a local sensitivity diagnostic. 

\paragraph{Local Bayesian admissibility}\label{admissibility}
In a classical setting, the notion of \textit{admissibility} helps define a smaller class of actions that can then be further scrutinized in order to choose an optimal decision.  A decision is denoted inadmissible if there does not exist a $\theta$ such that its risk function (frequentist) is minimal (with respect to the other decisions) at $\theta$. We note that in a Bayesian context, because the expected loss is a single quantity used to classify actions, only the action which minimizes expected loss is admissible. However if we consider the set of posterior distributions contained within a Kullback-Leibler neighbourhood of radius $C$, then an analogous definition of admissibility can be given.

First we define the pairwise difference in expected losses of any two actions $(a, a') \in A$, the `regret' loss of having chosen $a$ instead of $a'$:
\[ L_{(a,a')}(\theta) = L_a(\theta)-L_{a'}(\theta) \]
and the corresponding least favourable pairwise distribution:
\begin{eqnarray}
\pi_{(a,a'),C}^{\sup}  & =  & \arg \sup_{\pi \in \Gamma_C} \left\{ \E_{\pi}[L_{(a,a')}(\theta)]  \right\} \nonumber \\
& = & Z_C^{-1} \pi_I(\theta) \exp\left(\lambda_{(a,a')}  [L_a(\theta)-L_{a'}(\theta)]\right) \nonumber 
\end{eqnarray}
with expected loss $\psi_{(a, a')}^{\sup}(C) = \int_{\theta} \pi_{(a,a'),C}^{\sup}(\theta)  [L_a(\theta)-L_{a'}(\theta)] d \theta$.
\begin{definition}\label{def:Cdominated}
 We say that an action $a$ is $C^*$-dominated, or locally-inadmissible up to level $C^*$ when, 
$$
C^* := \arg\sup \{C : \psi^{\sup}_{({a, a'})}(C) < 0, ~~ \forall a' \in A \setminus {a} \}
$$
\end{definition}
If $a$ is $\infty$-dominated then it is globally inadmissible (this retrieves the classical notion of admissibility).

We note that plotting $\psi_{(a)}^{\sup}(C), a\in\mathcal{A}$ for values $C\in[0,C^*]$ does not give any information as to the admissibility of the actions $a\in\mathcal{A}$.
In order to graphically represent admissibility (inadmissibility), it is necessary to look at least favourable distributions defined over the pairwise difference in loss for two actions $a,a'$. By plotting $\psi^{\sup}_{(a,a')}$ as a function of constraint radius $C$ we can look for actions that are dominated, such that there is no $\pi \in \Gamma_C$ for which they are optimal.

\paragraph{Calibration of neighbourhoods}\label{calibration}
In most scenarios the local least favourable distribution $Z_{\lambda}^{-1} \pi_I(\theta)\exp[\lambda_a L_a(\theta)]$ will not have closed form. Moreover $\pi_I(\theta)$ will often only be 
represented as a Monte Carlo bag of samples $\{\theta_i\}_{i=1}^m \sim_{iid} \pi_I$. In this case the distribution can be approximated by using $\pi_I$ as an importance sampler (IS) leading to,
\begin{eqnarray}
\tilde{\pi}^{\sup}_{a,C} & = & \frac{1}{Z_{\lambda_a}} \sum_i w_i \delta_{\theta_i}(\theta) \nonumber \\
w_i & = & \exp[\lambda_a L_a(\theta_i)], \quad Z_{\lambda_a}= \sum_i w_i  \nonumber
\end{eqnarray}
We can then use $\tilde{\pi}^{\sup}_{a,C}$ to calculate $(\psi^{\inf}_{(a)}, \psi^{\sup}_{(a)})$. 
For a small neighbourhood size and hence small $\lambda_a$ relative to $L_a(\theta)$ this IS approximation should be reasonable. In general if $\pi_I$ is thought to be a useful model to the truth then the neighbourhood size should be kept small. However as $\lambda$ increases, the variance of the un-normalised importance weights 
will grow exponentially and the approximation error with it. In this situation the problem appears amenable to sequential Monte Carlo samplers taking $\lambda_a \ge 0$ as the ``time index'' although here we do not explore this any further. This points to the wider issue of how to choose the size of the neighbourhood $\Gamma$. 
However, in the Monte Carlo setting of this problem, the statistician can decide on candidate KL values using one or more of the following ideas:

\begin{itemize}
\item Plotting the distribution of the importance weights and deciding whether this is `degenerate'. By this we mean all the mass on one or more samples with high loss, and no mass on the lower values. This can be summarised by the variance of the weights, with the minimax solution having variance $(m-1)/m^2$.
\item Define a \textit{inequality} score for the set of importance weights. All weights $w_i=1/m$ would be perfectly equal, and the minimax solution would be perfectly unequal. We suggest considering KL values $C$ up to a $C_{max}$ defined as assigning 99\% of the mass to 1\% of the samples.
\item Use qualitative exploratory methods. For example, plot the marginal distributions of the minimax solution over dimensions of interest, i.e. which are interpretable.
\end{itemize} 
We consider that the calibration of the Kullback-Leibler divergence remains an open problem, even though this divergence is used in many applications. \cite{mcculloch1989local} proposes a general solution using a Bernoulli distribution, but it is not obvious that this translates well into a method for the calibration of any continuous distribution.

\subsubsection{Connections to other work}

The use of local least favourable distributions turns out to be connected to well known statistical techniques. We outline some examples for illustration. 

\paragraph{Predictive tempering as a local least favourable distribution} 

Consider the task, or action, to provide a predictive distribution, $\widehat{f(y | x)}$, for a future observation $y$ given covariates $x$. The local proper scoring rule in this case is known to be the self-information logarithmic loss $L(y) = - \log f(y | x)$ \cite{bernardo1994bayesian}. The conventional Bayesian solution is to report your honest marginal beliefs as $\widehat{f(y | x)} = f_I(y | x)$, where given a model parametrised by $\theta$ we have $f_I(y|x) = \int f(y | x, \theta) \pi_I(\theta) d \theta$. However this assumes that the model is true and moreover that the prediction problem is stable in time, in that the prediction probability contours do not drift (see below). Both of these assumptions may be dubious. The robust local-minimax solution above protects against misspecification and leads to 
$$
\widehat{f^*_{\sup}(y | x)} \propto f_I(y|x)^{1- \lambda},
$$
for $\lambda \in [0, 1]$. This has the form of tempering the predictive distribution, taking into account additional external levels of uncertainty outside of the modelling framework. In this way, predictive annealing can be seen as a local-minimax action.

\paragraph{Concept drift.} In data mining applications we may have access to meta-data, $t_i$, for the $i$'th observation and a belief that loss is ordered or structured by the information in $t$. For example, $t$ might index time and due to ``concept drift'' the analysis might hold greater loss in predicting using more historic collected observations, \citep[e.g. Section 3.1 in][]{hand2006classifier}, though more generally $t_i$ simply contains information relative to predictive loss. In this case the natural loss function is a weighted self-information loss, based on the empirical distribution:
$$
L(\theta) = - \sum_i \Delta(t_i) \log f(y_i; \theta).
$$
with $\Delta(t_i) \in (0,1)$ encapsulating the relative weight of log loss to the future predictive.
 
For prediction of a new observation $y^*$ given $x^*$ this leads to the robust solution as 
\begin{eqnarray*}
\widehat{f_{\sup}(y^* | x^*)}  & \propto & \int_{\theta} f(y^* | x^*, \theta) \left[ \prod_i f(y_i; \theta) \pi(\theta)] \right]  e^{- \sum_i \Delta(t_i) \log f(y_i; \theta)}  d \theta \\
 & \propto & \int_{\theta}  f(y^*|x^*, \theta) \left[ \prod_i f(y_i; \theta)^{1-\Delta(t_i)} \pi(\theta) \right] d \theta
 \end{eqnarray*}
 that can be seen to down-weight the information in $y_i$ used to predict $y^*$. For example, if $t_i$ records the time since the current prediction time then a natural penalty is $\Delta(t_i) = \exp(-\lambda t_i)$, where $\lambda$ encodes a predictive forgetting factor. For a related approach see \cite{hastie1993varying}.

\paragraph{Gibbs Posteriors and PAC-Bayesian:} Suppose you hold prior beliefs about a set of parameters $\theta$ but don't know how to specify the likelihood $f(x | \theta)$, and hence lack a model $\pi(x, \theta)$.  For example, suppose $\theta$ refers to the median of $F_X$ with unknown distribution. Suppose the task (action) is to provide your best subjective beliefs $\pi(\theta | \cdot)$ conditional on information in the data and prior knowledge. We don't have a likelihood but we could have a well defined prior hence $\pi_I(\theta) = \pi(\theta)$. In this situation there may 
be a well defined loss function on the data that we would wish to {\em{maximise}} utility  against for specifying beliefs, e.g. for the median we should take 
$$
L(\theta) = \sum_i | x_i - \theta |
$$
The distribution that {\em{minimises}} the expected loss within a certain KL divergence of the prior is given by  the local-maximin distribution, 
$$
\pi^{\inf}_{a,C} = Z_{\lambda_a}^{-1} e^{- \lambda_a \sum_i | x_i - \theta |} \pi(\theta) 
$$
This has the form of a Gibbs Posterior or an exponentially weighted PAC-Bayesian approach \citep{Zhang2006a,Zhang2006b,Bissiri2013,Dalalyan2008,Dalalyan2012}. In this way we can interpret Gibbs posteriors as local-maximin solutions in the absence of a known sampling distribution \citep{Bissiri2013}.

\paragraph{Conditional $\Gamma$-minimax priors:} 
There is a direct relationship to $\Gamma$-minimax priors when the $L_a(\theta)$ involves all the parameters in a parametric model where the posterior is
\[\pi_I(\theta) \propto \Pi_{j=1}^n f(x_j | \theta) \pi(\theta)\]
with likelihood $f(\cdot | \theta)$ and prior $\pi(\theta)$. 

Thus the posterior least favourable distribution
\[ \pi_a^{\sup}(\theta) \propto e^{\lambda_a L_a(\theta)} \Pi_{j=1}^n f(x_j | \theta) \pi(\theta) \]
can be considered a Bayes update using the minimax prior $e^{\lambda_a L_a(\theta)} \pi(\theta)$ (dropping the normalisation constant). This is an action specific prior. 

Note that the KL divergence is the only divergence to ensure this coherency, and also that the ``prior'' $ \pi^{\sup}_a(\theta)$ is data dependent if the loss function uses the empirical risk, i.e. is of the form $L_a(\theta,X)$.

\subsection{Weak Neighbourhoods}\label{dptheory}

From a Bayesian standpoint  its more natural to characterise the variation in expected loss arising over all models in some neighbourhood $\Gamma$, rather than performing minimax optimisation within the neighbourhood. In order to quantify this uncertainty and take expectations over distributions in the neighbourhood of $\pi_I$, we require a probability distribution on a set of probability measures centred on $\pi_I$. This is classically a problem in Bayesian nonparametrics, see for example \cite{hjort2010bayesian}. 
However, in a decision-theoretic context, only the functionals of the distributions $\pi\in\Gamma$ are of importance. In particular the functionals $\psi_a : \pi \rightarrow \E_{\pi}[L_a(\theta)]$ for $a\in\A$ (expected loss). It is important to note that two sequences of distributions $\pi_n,\pi_n^*$ can be infinitely divergent in Kullback-Leibler, or can remain at a finite distance in total variation metric, but weakly converge, i.e. their functionals converge, see \cite{KLnote} for an example and a further discussion of this.
Thus, if we set a nonparametric distribution $\Pi$ over measures $\pi$, that is centred at $\pi_I$: instead of studying the 'distance' between draws $\pi\sim\Pi$ and the reference distribution $\pi_I$, we can study the distance between the induced distributions $F_{a,\pi}(z)$ and $F_{a,\pi_I}(z)$, the (cumulative) distributions of loss for action $a$.
A suitable candidate distribution $\Pi$ should have wide support (to overcome the possible misspecification) and it should be possible to characterise the distance of the induced distributions $F_{a,\pi}$. The Dirichlet Process (DP) allows for exactly such a construction.

\subsubsection{Dirichlet Processes for functional neighbourhoods}

\begin{definition}
{\bf Dirichlet Process:} Given a state space ${\cal{X}}$ we say that a random measure $P$ is a Dirichlet Process on ${\cal{X}}$, $P \sim DP(\alpha, P_0)$, with concentration parameter $\alpha$ and baseline measure $P_0$ if for every finite measurable partition $\{B_1, \ldots, B_k\}$ of ${\cal{X}}$, the joint distribution of $\{P(B_1), \ldots, P(B_k)\}$ is a $k$-dimensional Dirichlet distribution $Dir_k\{\alpha P_0(B_1), \ldots, \alpha P_0(B_k)\}$.
\end{definition}
Using this definition we can then sample from distributions in the neighbourhood of  $\pi_I$ according to
$\pi \sim DP(\alpha, \pi_I)$, for some $\alpha>0$. 
In practice we can consider a draw from the DP via a constructive definition, 
\begin{equation}
\begin{aligned}
\{ \theta_i \}_{i=1}^m & \sim  \pi_I \\
 \underline{w} & \sim  {\rm{Dir}}_m(\alpha / m, \ldots,  \alpha / m),  \\
\tilde{\pi}(\theta) & :=  \sum_{i=1}^{m} w_i \delta_{\theta_i}(\theta)  \\
\end{aligned}\label{eq:constructivedef}
 \end{equation}
where the $\theta_i$'s are i.i.d. from $\pi_I$ and independent of the Dirichlet weights. As $m\rightarrow\infty$, $\tilde{\pi}$ tends to a draw $\pi\sim DP(\alpha,\pi_I)$.
This construction fits well with the Monte Carlo context, where $\pi_I$ is represented by a bag of samples $\{\theta_i\}_{i=1}^m$. If we draw multiple vectors $\underline{w}^{(1)},..,\underline{w}^{(k)} \sim {\rm{Dir}}_ m$, then in the limit $m\rightarrow\infty$, each corresponds to an independent draw from the $DP(\alpha,\pi_I)$, conditional on the atoms $\theta_i$.
In an ideal world, we would want to resample a set $\{\theta_i\}_{i=1}^m$ at each step. But this would not be feasible in practice and would defeat our purpose of constructing an ex-post methodology for analysing sensitivity. Therefore, this construction of the Dirichlet Process is more adapted than say the stick-breaking representation.

For an action $a$, the expected loss under the re-weighed draw $\tilde{\pi}$ is given by:
\begin{equation}\label{lossconstruction}
\psi_a^{\tilde{\pi}} = \sum_i w_i L_a(\theta_i)
\end{equation}
and the loss distribution by:
\[ F_{a,\tilde{\pi}}(z) = \sum_i w_i\1_{z\leq L_a(\theta_i)} (z)\]
In what follows, without loss of generality, we fix $a$ and consider the $\theta_i$ to be ordered by loss, i.e. $L_a(\theta_1)\leq...\leq L_a(\theta_m)$. Let $v_i = \sum_{j=1}^i w_i$, the cumulative summed weights, and $x_i := i/m$ for $i=1,..,m$. We also consider that the loss function $L(a,\theta)$ has undergone the following linear transformation (which does not alter the ranking of actions under expected loss):
\begin{equation}\label{losstransformation}
 L(a,\theta) \rightarrow \frac{L(a,\theta) - \min_{a,\theta} L(a,\theta)}{\max_{a,\theta} L(a,\theta)  - \min_{a,\theta} L(a,\theta)}  \end{equation}
This means each loss cdf takes values between [0,1].
We can study the $L_1$ distance between the empirical distribution\footnote{empirical in the sense that it corresponds to $\pi_I$ through i.i.d. sampling.} $F_{a,\tilde{\pi}_I}$ and the reweighed version $F_{a,\tilde{\pi}}$ which is given by:
\[\sum_{i=1}^m | v_i - x_i |\cdot [L_a(\theta_i) - L_a(\theta_{i-1})]  \]
For a fixed sample $\{\theta_i\}_{i=1}^m$, the increments $L_a(\theta_i) - L_a(\theta_{i-1})$ are also fixed, and it is possible to compute the expected difference $| v_i - x_i |$ by noting that $v_i\sim {\rm{Beta}}(x_i \alpha, (1-x_i)\alpha)$.
This is given by:
\[ \E_v\{|v_i-x_i|\} = \frac{2}{\alpha}\frac{[x_i^{x_i}(1-x_i)^{(1-x_i)}]^{\alpha}}{{\rm Beta}(x_i \alpha, (1-x_i)\alpha)}\]
As a consequence of the linear transformation given in (\ref{losstransformation}), this $L_1$ difference is bounded by 1/2.
 $\E_{\underline{w}}\{| F_{a,\tilde{\pi}} - F_{a,\tilde{\pi_I}} |\}$ is dependent on the concentration parameter $\alpha$ which controls how close the draws $F_{a,\tilde{\pi}}$ are from the reference loss distribution; increasing $\alpha$ shrinks the $L_1$ distance.
However, it is important the note that this distance will also be dependent on the form of the loss function, i.e the increments $L_a(\theta_i) - L_a(\theta_{i-1})$.

\subsubsection{Probability of optimality}

From properties of the Dirichlet Process, we know that  $\E_{\Pi}[L_a(\theta)] = \E_{\pi_I}[L_a(\theta)]$, where $\Pi$ is the nonparametric measure defined in equation (\ref{eq:constructivedef}).
Thus if an action $a$ is optimal under the criterion of posterior expected loss (taken with respect to $\pi_I$), it will remain optimal under expected loss taken with respect to $\Pi$. 
Instead of looking at expected loss we consider the probability that a particular action will be optimal when drawing a random $\pi\sim DP(\pi_I,\alpha)$ (and computing expected loss with respect to this random $\pi$). 
That is to say, each random draw $\pi$ will induce a distribution of loss $F_{a,\pi}$ for action $a$. The probability that $a$ is optimal will depend on the concentration parameter $\alpha$.
As the concentration parameter $\alpha\rightarrow\infty$, the random loss distribution $F_{a,\pi}$ tends to $F_{a,\pi_I}$ in probability under the $L_1$ norm, thus giving back the optimality mapping induced by $\pi_I$. This gives rise to a diagnostic graph, where the probability of optimality of each action is plotted against the parameter $\alpha$. 
The probability of optimality is non-analytical in the general case, and dependent on the form of the loss function $L(a,\theta)$. However, given a Monte Carlo representation of $\pi_I$ and thus a matrix of loss values (number of samples $\theta_i$ times number of actions) it is easy to approximate via successive draws $\underline{w} \sim Dir(\alpha/n,..,\alpha/n)$ and using the construction given in (\ref{lossconstruction}).

\subsubsection{Calibration of the Dirichlet Process extension}

Contrary to the methodology proposed in section \ref{KLanalytic}, using a Dirichlet Process as a nonparametric model extension to test robustness gives a framework which is not action specific. However, it also relies on a free parameter $\alpha$ which needs to be calibrated in a principled manner. In order to do this, we define a reference action $a^*$, such that the loss distribution is given by $F_{a^*,\pi}(z) = z$, for $z\in[0,1]$\footnote{After the linear transformation given in (\ref{losstransformation}).}.

With the construction given in \eqref{lossconstruction}, if we draw weights $ \underline{w} \sim {\rm{Dir}}_m(\alpha / m, \ldots,  \alpha / m)$ and use uniform loss intervals, i.e. reweighing the distribution $F_{a^*,\pi}(z)$, it is then possible to compute $95\%$ confidence intervals for draws from a Dirichlet Process for a specific value of the parameter $\alpha$. Figure \ref{lossconfplots} shows an approximation of these confidence intervals for a series of values $\alpha$ whose $\log_{10}$ values are integers from 0 to 4. It is clear that low values of $\alpha$ (between 0 and 10 for example) imply a very low trust in the model, as the draws can vary hugely from the reference distribution. However the statistician might want the decisions to be robust to higher values of $\alpha$ where the draws are much tighter.
In section \ref{breastcancer} we illustrate the use of this method. 
\begin{figure}
\centering
\includegraphics[scale=.45]{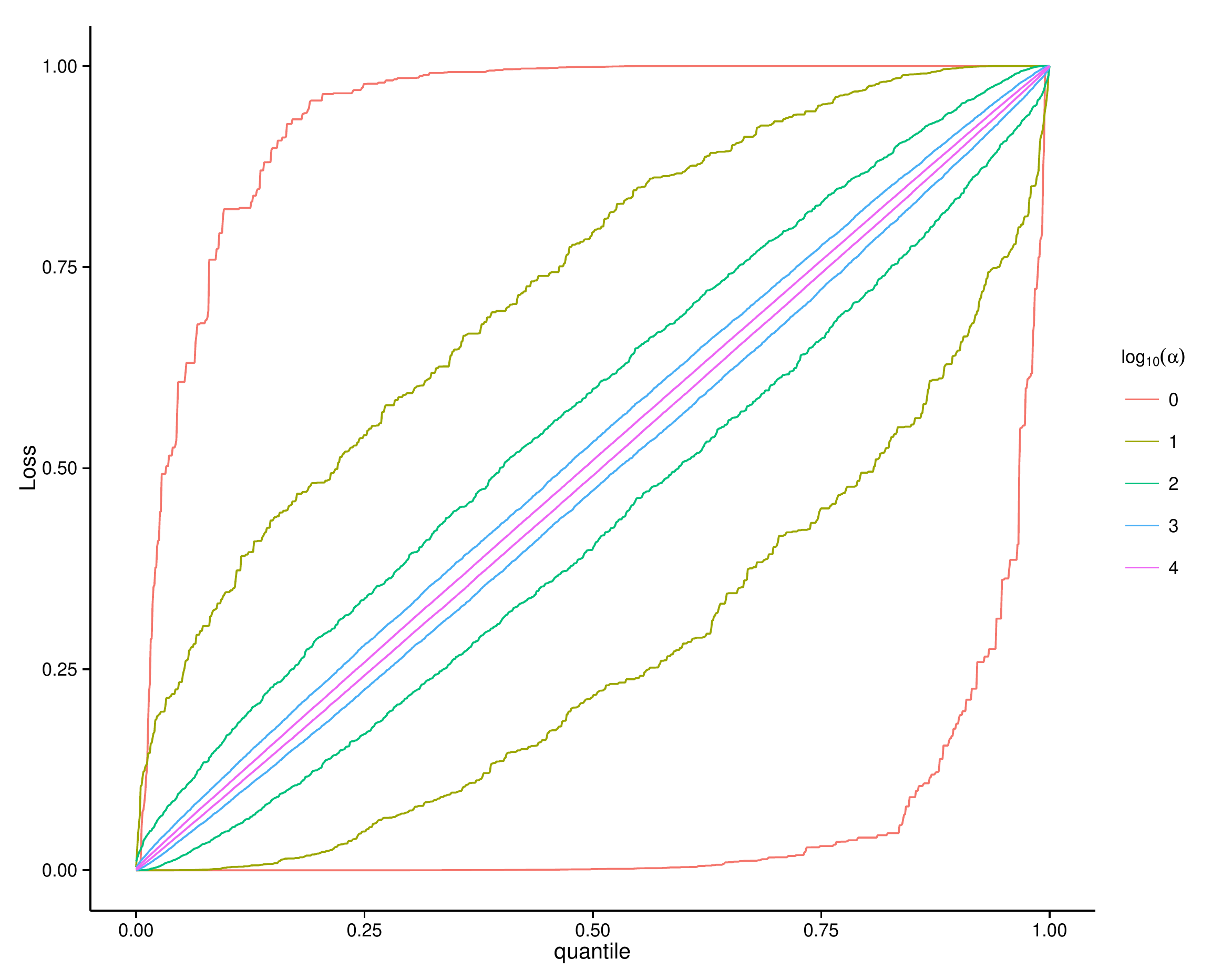}
\caption{95\% confidence intervals around the reference action $a^*$ for different values of the concentration parameter $\alpha$. From widest to tightest: $\log_{10}(\alpha)$ values are 0,1,2,3,4.}
\label{lossconfplots}
\end{figure}

\section{Applications}\label{applications}

\subsection{Synthetic Example, continued}

\begin{figure}
\centering
\includegraphics[scale=.45]{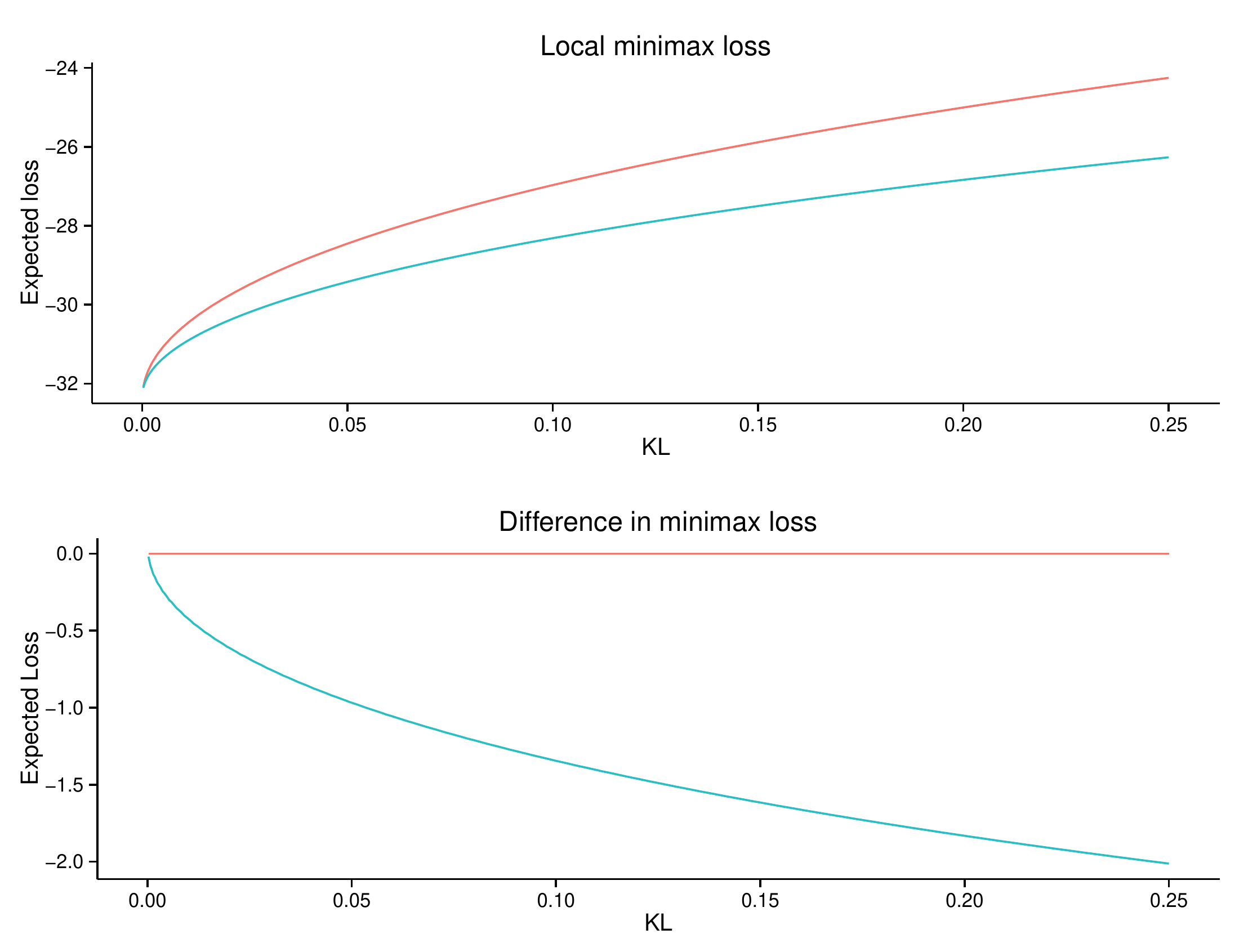}
\caption{Diagnostic plots for the local least favourable distribution, comparing vaccination (blue) to status quo (red). Top: local least favourable expected loss; Bottom: difference in minimax between non optimal and optimal actions.}
\label{SynKLplots}
\end{figure}
To illustrate the ideas from sections \ref{KLanalytic} and \ref{dptheory}, we continue with the vaccination vs. status quo decision problem given in section \ref{syntheticapp}. Figure \ref{SynKLplots} plots the expected loss of each action under the local least favourable distribution, as a function of the size $C$ of the KL ball\footnote{Using the function \textit{preliminaryAnalysis} from our package \textit{decisionSensitivityR}}. We observe that for very small value of KL the status quo action is no longer optimal. This is because of its much higher variance of loss. However, figure \ref{Synprobplots} shows the probability of optimality for each action as a function of the concentration parameter $\alpha$ (log scale). This highlights that in fact, the system is decision robust (see section \ref{discussion} for a discussion on decision robustness vs. loss robustness)
\begin{figure}
\centering
\includegraphics[scale=.35]{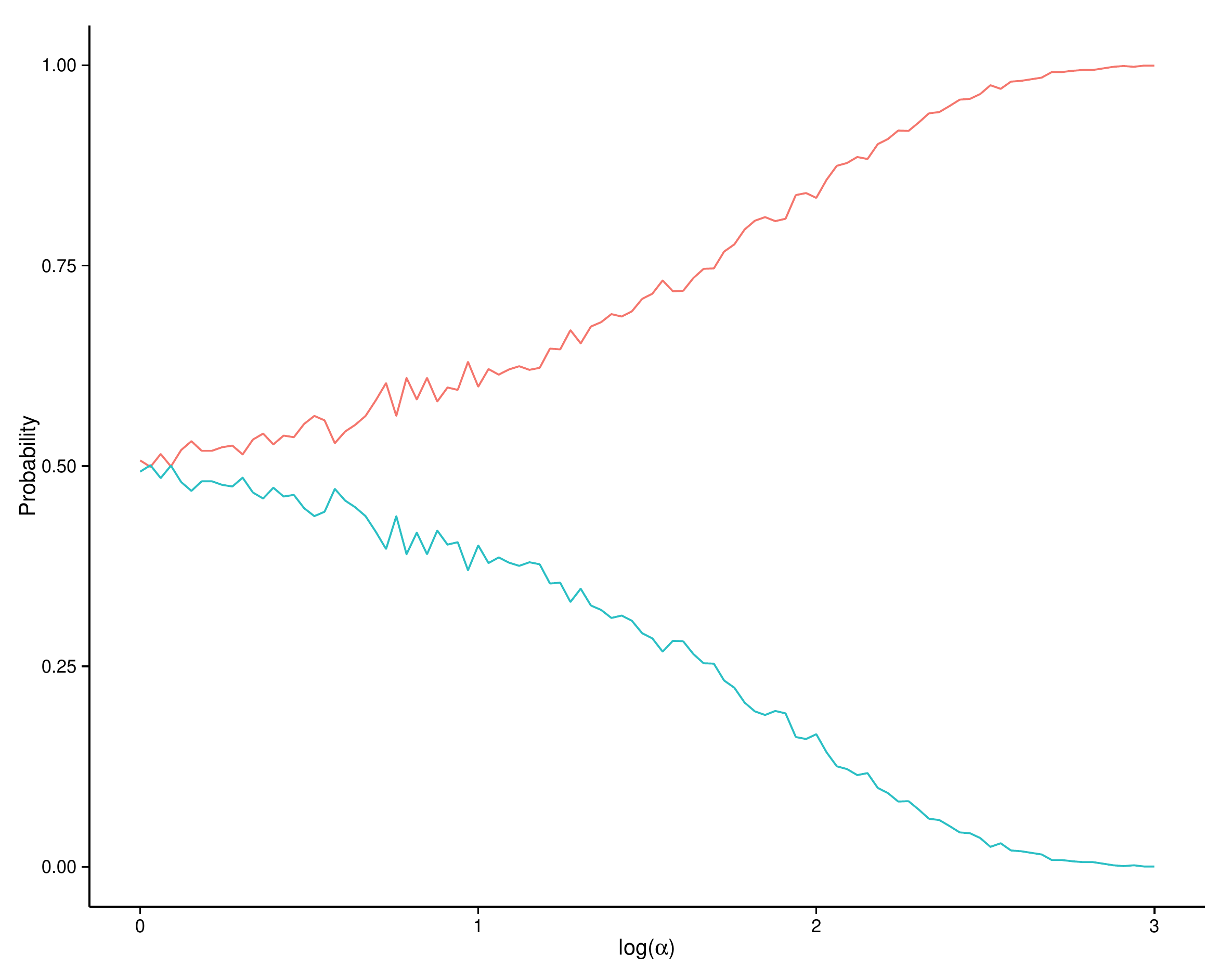}
\caption{Probability of optimality of the two actions (red: status quo, blue: vaccination) under a nonparametric extended model using a $DP(\pi_I,\alpha)$. The $\alpha$-values are plotted on a log scale.}
\label{Synprobplots}
\end{figure}
The vaccination vs. status quo decision problem is synthetic but it allows us the illustrate the diagnostic plots based on the formal methodology presented in section \ref{formalmethods}. We see that the two diagnostic methods respectively based on KL balls and on Dirichlet process extensions highlight different ways in which the optimal action can be sensitive to model misspecification. The local least favourable distribution concentrates on the high loss values of each action, thus making the vaccination action preferable for very small KL values (see figure \ref{SynKLplots}, top right). However using a Dirichlet model extension, the decision system is robust to symmetrical perturbations around $\pi_I$. This is shown in figure \ref{Synprobplots} where even for small values of $\alpha$ (shown on a log scale in the plot), the vaccination action (blue) is not more probably optimal than the status quo (red). This shows strong stability to these symmetrical perturbations.
Taking a decision as to whether to trust the model or not would be context dependent.

\subsection{Optimal Screening Design for Breast Cancer}\label{breastcancer}

Public health policy is an area where the application of statistical modelling can be used to optimally allocate resources. Breast cancer screening for healthy women over a certain age to detect asymptomatic tumours, is a hotly debated and controversial issue for which it is difficult to fully quantify the benefits. A recent independent review (\cite{independent2012benefits}), commissioned by Cancer Research UK and the Department of Health (England) concludes that only a randomised clinical trial would fully resolve this issue. This is of course the gold standard which permits causal inference. However a primary issue is determining the optimal screening schedule, consisting of a starting time $t_0$ (age of first screen), and a frequency $\delta$ for subsequent screens. It is of course sharply infeasible to trial all combinations of schedules $(t_0,\delta)$. An optimal trial design however can be constructed using a statistical model of disease progression throughout a population. \cite{Parmigiani} proposed using a semi-Markov process consisting of four states which generalises to any chronic disease characterised by an asymptomatic stage. All individuals start in state $A$, disease-free. They then transition either to the absorbing state $D$ (death) or contract the disease, modelled by a transition to state $B$, the pre-clinical stage. This is followed by a transition to either the clinical stage of the disease or death. It was assumed that each transition happens after a time $t$ with the following densities:
\begin{equation}
  \begin{aligned}
t_D \sim  h(t|\alpha,\beta) &= \mathrm{Weibull}(\alpha,\beta)  \\
t_B \sim f(t|\mu,\sigma^2) &= \mathrm{LogNormal}(\mu,\sigma^2) \\
t_C \sim q(t|\kappa,\rho) &= \mathrm{LogLogistic}(\kappa,\rho) 
\label{piImodel}
  \end{aligned}
\end{equation}

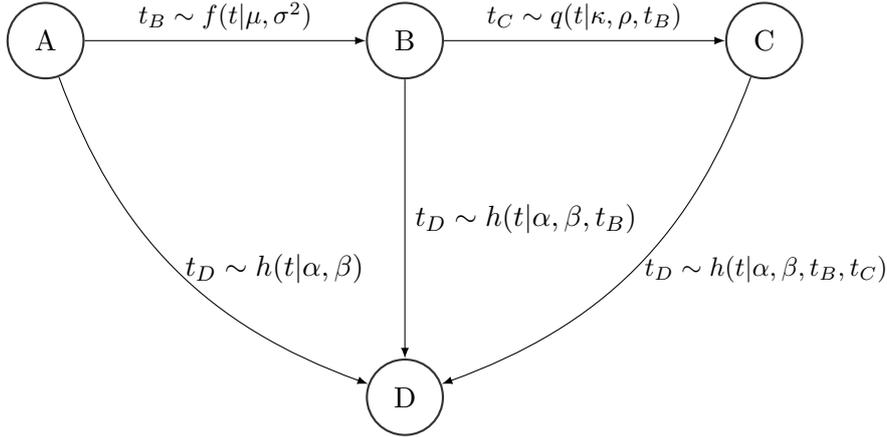
\begin{figure}
\centering
\begin{tikzpicture}
\tikzstyle{main}=[circle, minimum size = 10mm, thick, draw =black!80, node distance = 37mm]
\tikzstyle{connect}=[-latex, thin]
\tikzstyle{box}=[rectangle, draw=black!100]
  	\node[main, fill = white!100] (A) {A};
 	 \node[main] (B) [right=of A] {B};
 	 \node[main] (D) [below=of B] {D};
 	 \node[main] (C) [right=of B] {C};
 	 \path[every node/.style={font=\sffamily\small}]
	 	(A) edge [connect] node [above] {$t_B\sim f(t|\mu,\sigma^2)$} (B) 
        		(B) edge [connect] node [above] {$t_C \sim q(t|\kappa,\rho,t_B)$} (C) 
        		(C) edge [connect, bend left=25] node [right] {$t_D \sim h(t|\alpha,\beta,t_B,t_C)$} (D);
 	\path
        		(A) edge [connect, bend right=25] node [right] {$t_D \sim h(t|\alpha,\beta)$} (D);
	\path
		(B) edge [connect] node [right] {$t_D \sim h(t|\alpha,\beta,t_B)$} (D) ;
\end{tikzpicture}
\caption{Graphical model of the transitions and transition densities between states.}
\label{GraphicalModel}
\end{figure}

Figure \ref{GraphicalModel} shows a graphical model of the four state semi-Markov process with transition densities. An individual is characterised by the triple $t=(t_B, t_C, t_D)$ where the symptomatic stage of the disease is contracted only when $t_D < t_B+t_C$ (assuming that all individuals will contract the disease if they lived long enough). For a screening schedule $a=(t_0,\delta)$ the loss function is defined as follows (a function of the times $t=(t_B,t_C,t_D)$):

\begin{equation}\label{bclossfunction}
L(a,t) = r\cdot n_a(t)  + \mathbbm{1}_{C} 
\end{equation}

\begin{figure}[ht]
\centering
\includegraphics[scale=.4]{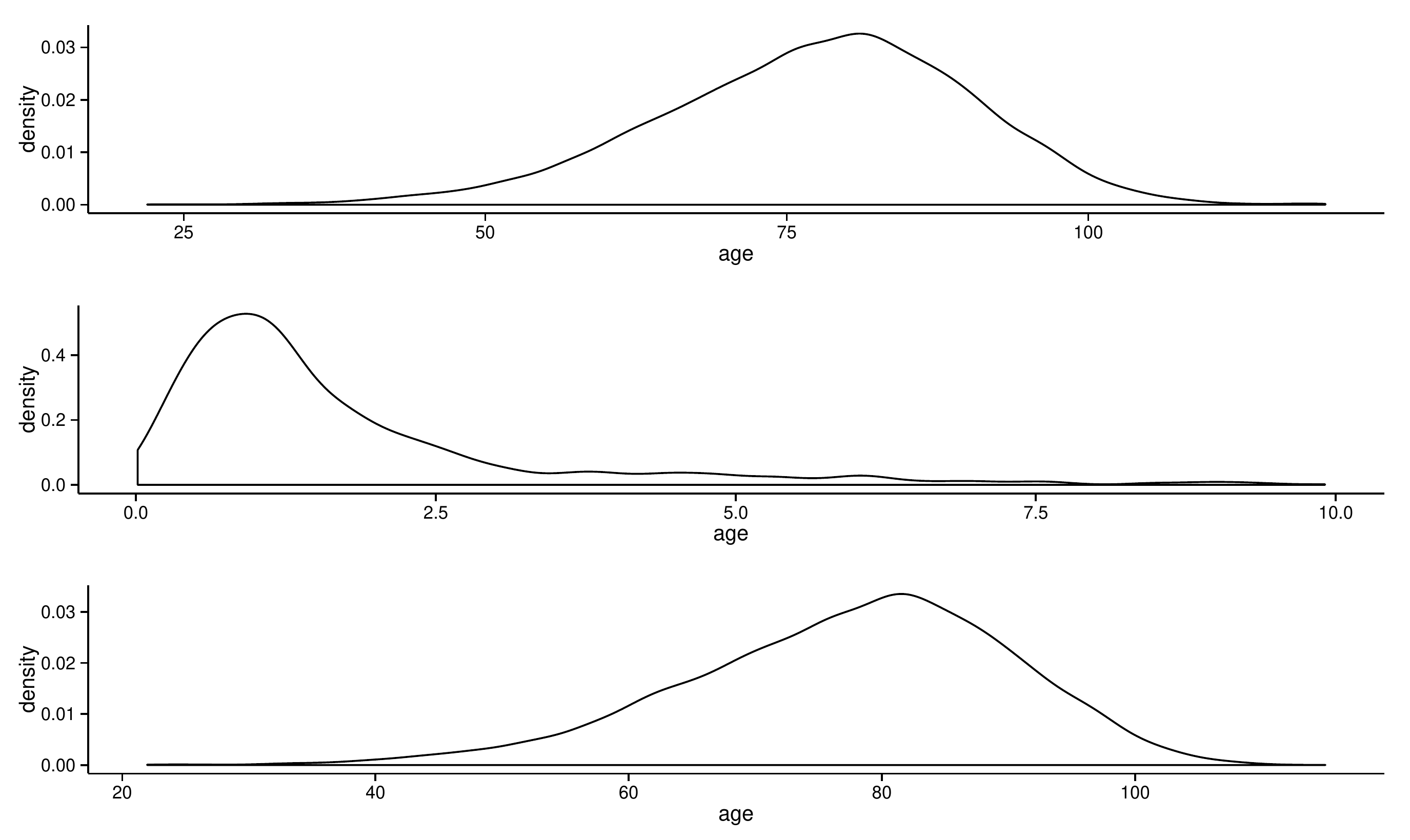}
\caption{Probabilistic model of transition times: (from top to bottom) marginal densities of transition times to preclinical stage, transition from preclinical to clinical stage, and death times.}
\label{BCsampletimes}
\end{figure}
\begin{figure}[ht]
\centering
\includegraphics[scale=.5]{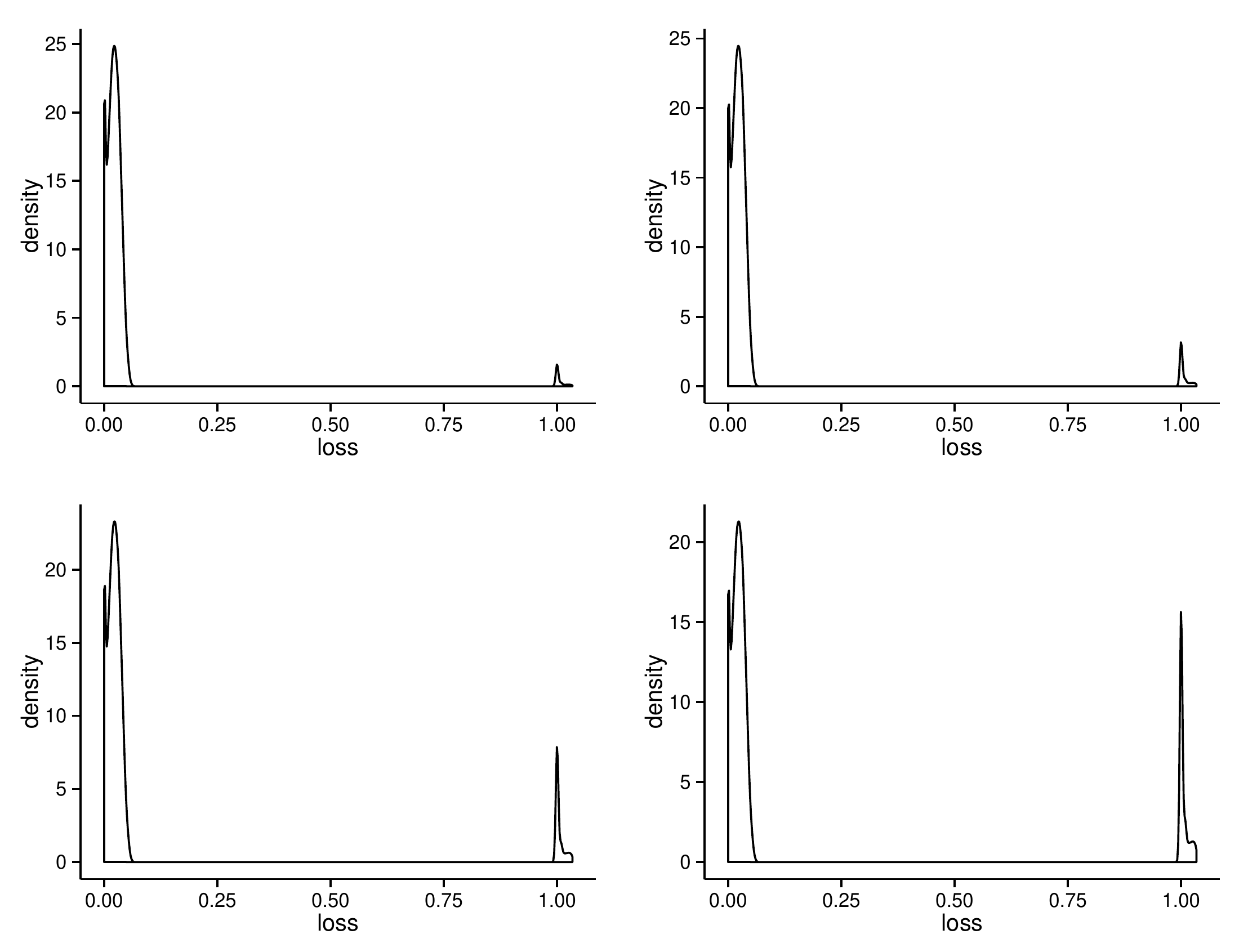}
\caption{Top left: loss density for the optimal action (start at 59, frequency 15 months) under the approximating model $\pi_I$ given in (\ref{piImodel}). Going from top right to bottom right: loss densities, for the same action, under the local least favourable distribution at KL divergences equivalent to a reassignment of mass of 2, 5 and 10\%, respectively. These are approximate: 0.008, 0.08, 0.3. The effect can be seen as increasing the mass put onto high loss events.}
\label{BClossplots}
\end{figure}
where $n_a$ is the number of screening schedules an individual will receive during their lifetime, until they die or enter into the asymptomatic stage of the disease. $\mathbbm{1}_C$ is the indicator function, taking value 1 for the event that the pre-clinical tumour is not detected by screening or occurs before $t_0$, and zero otherwise. $r$ trades off the cost of one screen against the cost incurred by the onset of the clinical disease. Each screen has an age-dependent false-negative rate modelled with a logistic function:
\[ \beta(t) = \frac{1}{1+ e^{-b_o-b_1(t-\tilde{t})}}\]
where $\tilde{t}$ is the average age at entry in the study group.
To simulate transition times for individuals from this model, we used 2000 posterior parameter samples for $\theta=(\mu,\sigma^2,\kappa,\rho,b_0,b_1)$ given in the supplementary materials of \cite{BreastCancer}. This is based on data from the HIP study \cite{HIPstudy}. Figure \ref{BCsampletimes} shows the estimated marginal densities for $10^4$ sampled times for each transition event\footnote{We calibrate the Weibull distribution with values $\alpha=7.233,\beta=82.651$ which are the values used in \cite{Parmigiani}}. 
To carry out an ex-post analysis of this model, we first considered 32 alternative schedules, consisting of all combinations of starting ages taken from the set $\{55,57,59,61,63,65,67,69\}$ (years) and screening frequencies of $\{9,12,15,18,24\}$ (months). This choice of screening schedules is mainly illustrative for our purposes: the optimal schedule will heavily depend on the choice of $r$ (trade-off ratio in equation \ref{bclossfunction}) which we do not attempt to justify (the value $10^{-3}$ was taken from the section 4.5 of \cite{ruggeri2005robust} where the authors also considered this application). In order for the plots to be legible, we selected the top 6 schedules \footnote{Given in order of increasing expected loss these are: (59,15), (55,15), (57,15), (61,15), (55,18) and (59,18).}(as ordered by expected loss under the reference model) for analysis. However, there is no reason not to analyse a greater number of schedules other than for clarity in plotting.
The top left plot in figure \ref{BClossplots} shows the loss density plot of the optimal action corresponding to the schedule $a=(t_0=59,\delta =15)$ (units in years and months) and a trade-off parameter $r=10^{-3}$. The other three plots show the corresponding loss density for the minimax distributions at KL values equivalent to reassigning 2,5 and 10\% of the mass, respectively. The effect can be see as transferring the mass from left to right, i.e. from low loss to high loss. 
The losses incurred for a particular schedule $a=(t_0,\delta)$ can be seen to be highly bimodal. Most of the population do not contract the disease and therefore contribute a loss of $r\cdot n_a$ (cost of screen times number of total screens during lifetime ). The loss contributed by those who do contract the clinical stage of the disease is of magnitude $1/r$ greater by definition. 

\begin{figure}[ht]
\centering
\includegraphics[scale=.45]{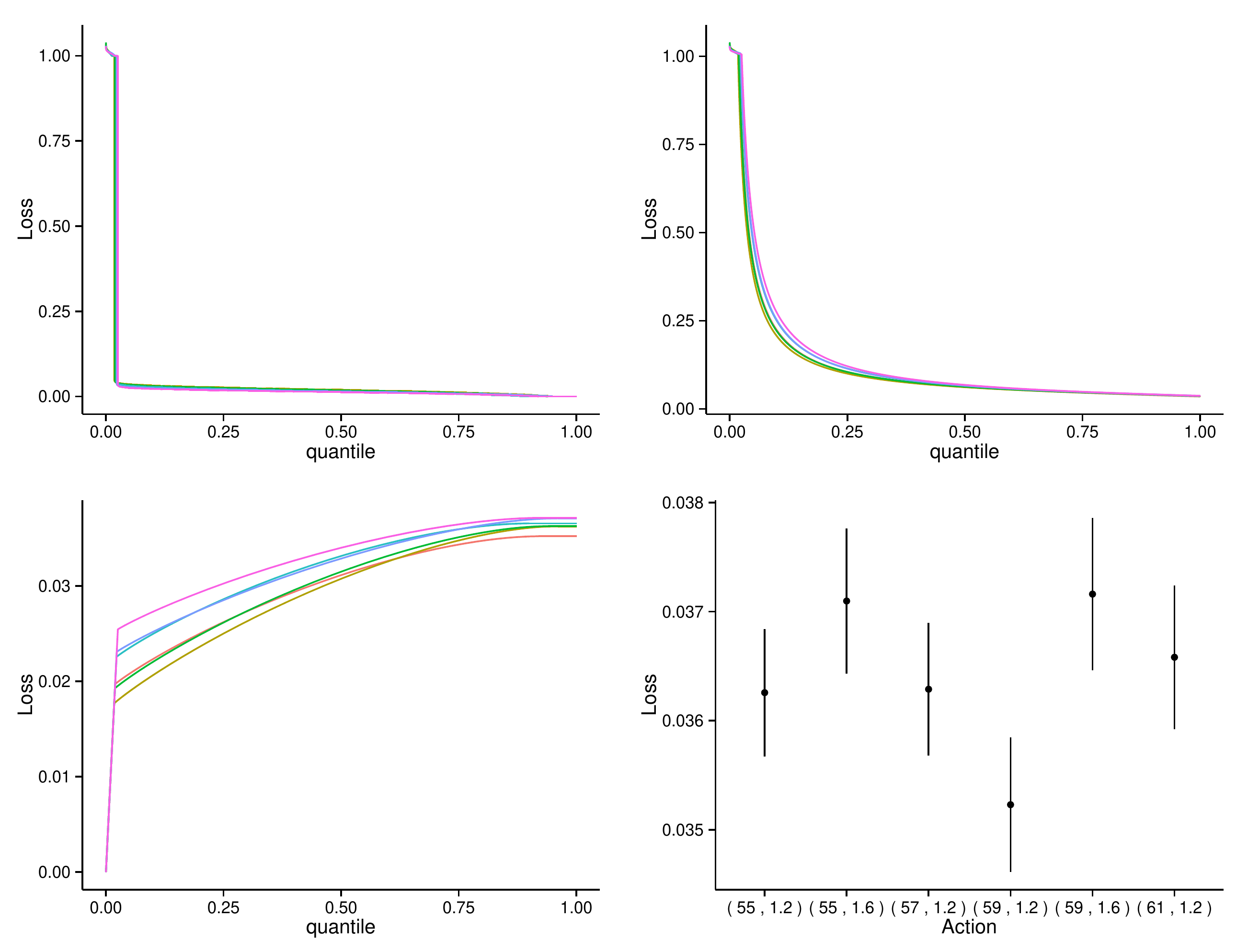}
\caption{Model diagnostic plots. From top left to bottom right: inverse loss distributions of the 6 actions (all very close in shape); Upper trimmed mean loss which differentiates the actions by showing the higher downside in some schedules; conditional expected loss; estimates of expected loss centred inside intervals of two standard deviations. We see that the expected loss $\psi_{a,\pi_I}$ for all actions is driven by low probability, high loss events (shape of CEL plot).}
\label{BClossdiagnostics}
\end{figure}

Figure \ref{BClossdiagnostics} gives four diagnostic plots for the loss distribution: inverse loss distribution, the Value at Risk, the Conditional Value at Risk and the Conditional Expected Loss. These are defined in section \ref{diagnostics} and are shown here with the schedules (decisions) aforementioned. The Conditional Expected Loss plot very clearly shows that the expected loss values are driven by low probability events (around 10\% of the mass).

\begin{figure}[ht]
\centering
\includegraphics[scale=.5]{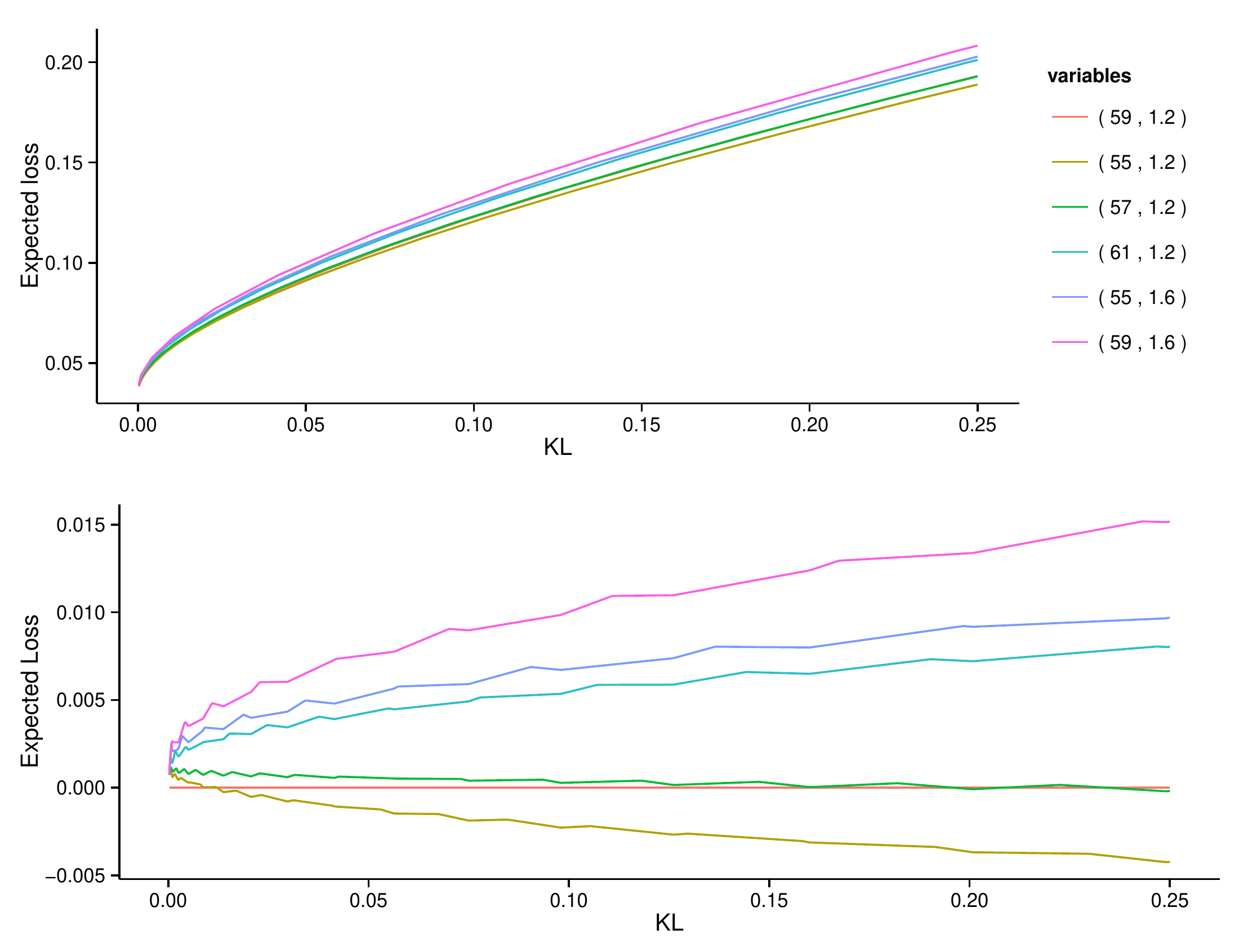}
\caption{Diagnostic plots for local least favourable distribution. Top: plot of minimax expected loss versus the size $C$ of the KL neighbourhood; Bottom: difference between the minimax expected loss of each action and that of the optimal action $a^*$.}
\label{BCKLdiagnostics}
\end{figure}

The diagnostic plot in figure \ref{BCKLdiagnostics} which based on the theory given in section \ref{KLanalytic} confirms that the decision system is sensitive to small changes in the model. The difference in expected loss under the local least favourable distributions between action (55,15) and the optimal action is negative for small KL values (bottom plot, figure \ref{BCKLdiagnostics}). Hence small perturbations (in KL divergence) changes the optimality of the actions. This is also apparent from figure \ref{localadmissibility}, where we plot the local admissibility of the optimal action under $\pi_I$ (see section \ref{admissibility}. For very small neighbourhoods of in KL, the optimal Bayes action is no longer locally admissible. 

\begin{figure}[ht]
\centering
\includegraphics[scale=.5]{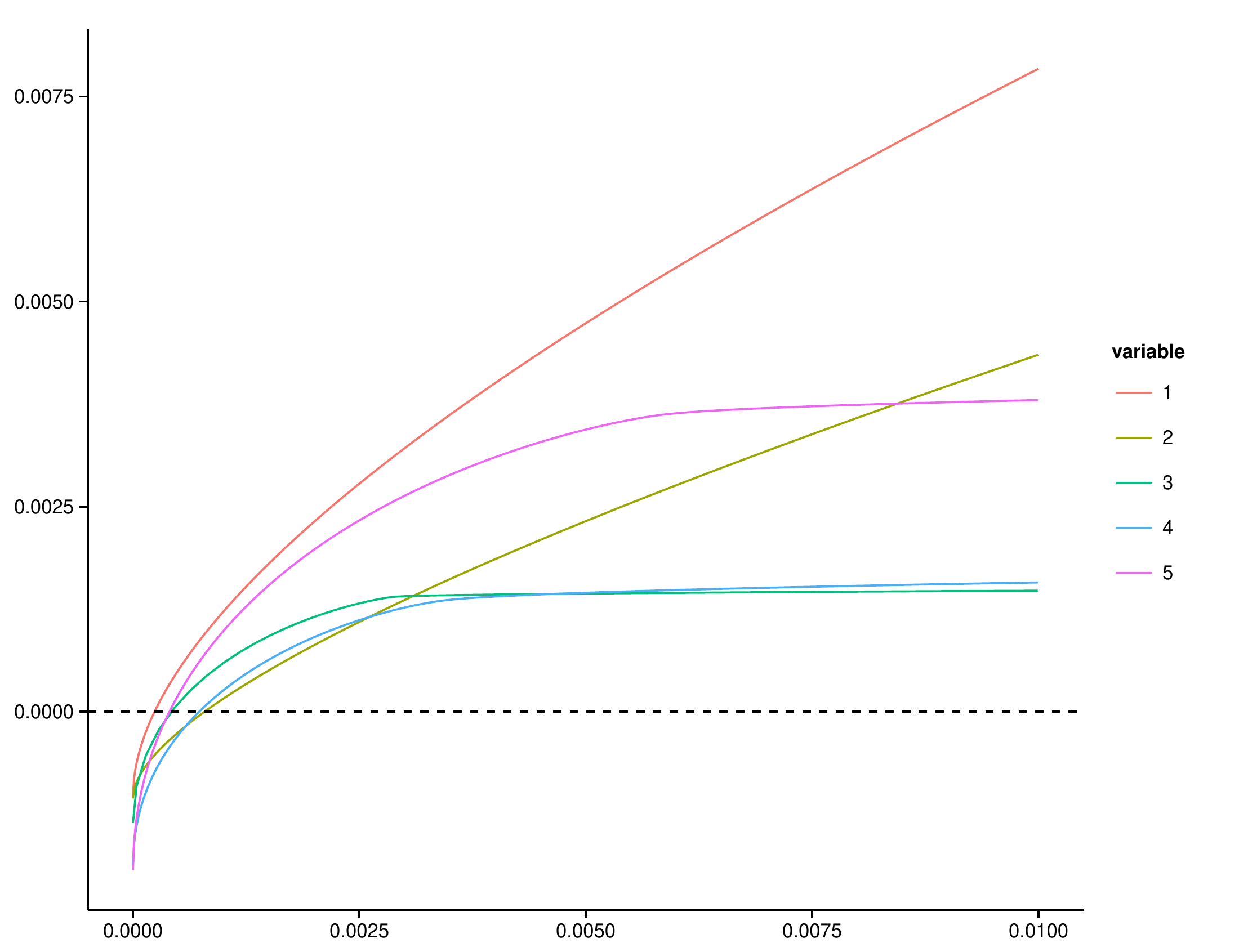}
\caption{Local Bayesian admissibility plot.}
\label{localadmissibility}
\end{figure}
%
\begin{figure}
	\centering
 	\includegraphics[scale=.4]{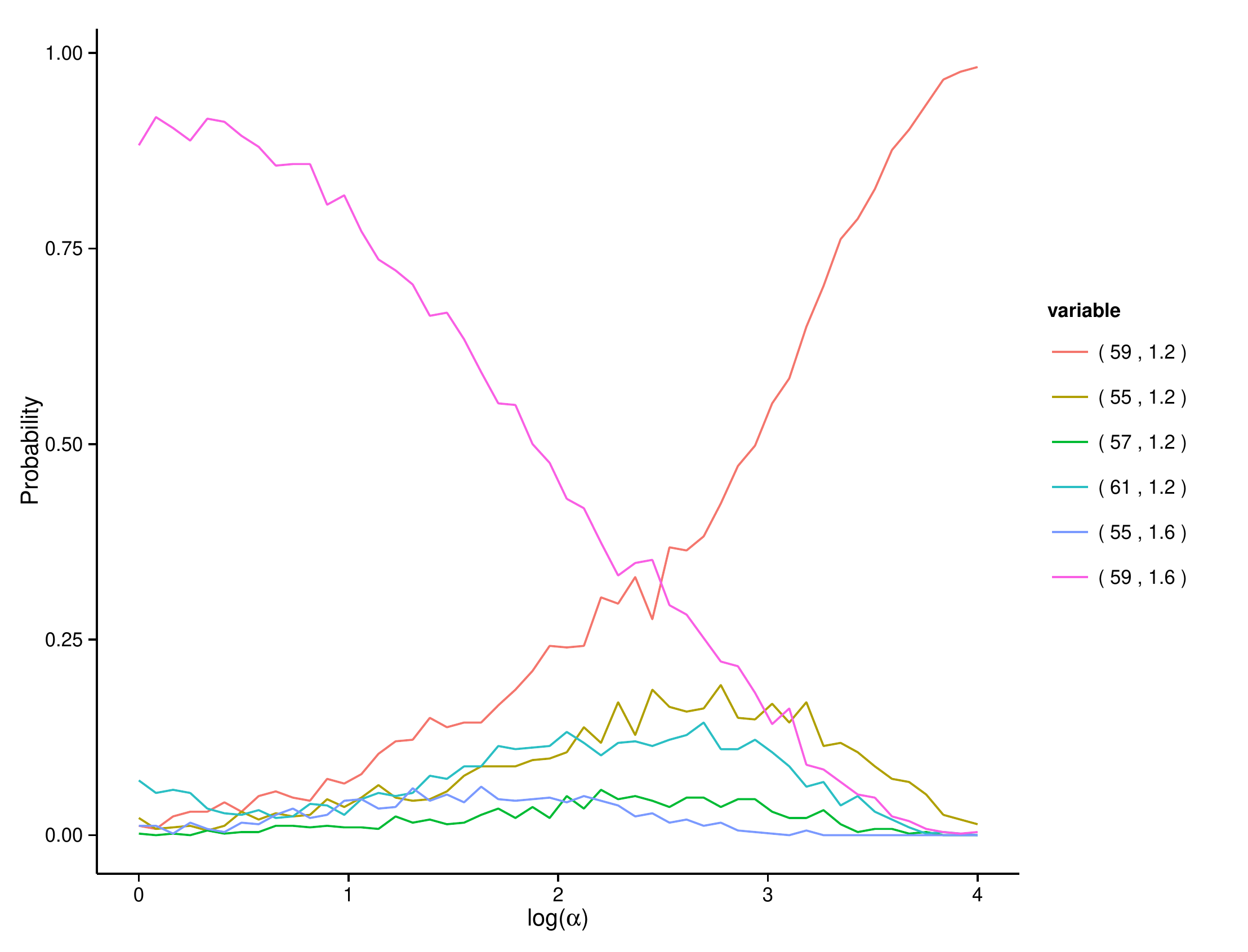}
\caption{Probability of being optimal for the top six actions selected by expected loss. The $\alpha$-values are plotted on a $log_{10}$ scale. The legend gives the actions ordered by increasing expected loss. We observe the optimal action (under $\pi_I$) becomes most likely for $\alpha\ge 10^3$. Below that value, the most probable is (59,18), i.e. the least optimal under $\pi_I$.}
\label{BCproboptimality}
\end{figure}
As a final diagnostic plot we look at the probability of optimality under the Dirichlet extension model (section \ref{dptheory}).
Figure \ref{BCproboptimality} shows the probability for varying values of concentration parameter $\alpha$ of each action being optimal. We see that for large values of $\alpha$ (greater than $10^4$) we recover the optimal action under $\pi_I$. However, for smaller values, the least optimal under $\pi_I$ (of the 6 selected) has a higher probability of being optimal. This shows the lack of decision robustness in this problem, mainly due to flatness of the loss surface.

This application highlights an interesting distinction that must be made when considering model misspecification in a decision-theoretic context. The loss surface is very flat for changes in screening schedule. That is to say, there is little relative difference in expected loss between similar screening schedules. This is also noted by \cite{ruggeri2005robust} in their analysis of the problem. This particular application is robust to changes in the model (in an expected loss sense) but not however decision robust. I.e. small perturbations to the model will change the optimality of an action $a^*$. We discuss this idea further in section \ref{discussion}.

\section{Conclusions}\label{discussion}

The goal of this article is to aid decision makers by providing statistical methods for exploring sensitivity to model misspecification. We hope this will generate further debate and research in this field. The increase in complex high-dimensional data analysis problems, ``big-data'', has driven a corresponding rise in approximate probabilistic techniques. This merits a reappraisal of existing diagnostics and formal methods for characterising the stability of inference to known approximations.

In one sense, the formal methods presented in Section 4 can be considered as extensions of the conditional $\Gamma$-minimax approach \citep{vidakovic2000gamma}. However, we advocate that the neighbourhood should be defined with respect to the marginal distribution on only those elements that enter into the loss function. Moreover we have shown that the Kullback-Leibler divergence is the only coherent measure to use for $\Gamma$ in local-minimax updating. Further motivation for using KL is given in Chapters 1 and 9 in \cite{hansen2008robustness}. To this we might add, that KL$(p \parallel q)$ is invariant to re-parametrisation; that in information theory it represents the number of bits of information needed to recover $p$ from model $q$; it represents the expected log-loss in using $q$ to approximate $p$ when using proper local scoring rules \citep{bernardo1994bayesian}; and KL bounds the L1 divergence KL$(p \parallel q) \le \parallel p - q \parallel_1$. However, none of this provides a constructive approach for choosing the KL radius $C$. In chapter 9 of \cite{hansen2008robustness}, the authors suggest using detection error probabilities to calibrate the size of the neighbourhood $\Gamma$. This stems from the concept of statistically indistinguishable models given a finite data sample of size $N$. Using model selection principles based on likelihood ratio tests, the user determines a plausible probability (a function of the radius $C$) of selecting the wrong model given the available data, and then inverts this value to find $C$ (by simulation). Although this is a principled method, in many cases even the detection error probability could be difficult to calibrate. We propose using the distribution of the tilting weights to compute an inequality score or the variance to find plausible values of $C$.   

In terms of implementation, we showed that the formal approaches have simple numerical solutions via re-weighted Monte Carlo samples drawn from the approximating model; using exponentially tilted weights for the local-minimax solution and stochastic Dirichlet weights for the marginal loss distribution.

To complete this discussion of robustness under model misspecification, it is important to note the distinction between ``decision robustness'' and ``loss robustness'' as discussed in \cite{Kadanediscussionberger}. A system is said to be decision robust if perturbations to the model do not effect the optimality of an action $\hat{a}$. On the other hand, it is said to be loss robust, if those perturbations do not effect the overall expected loss of the action $\hat{a}$ (in a relative sense). It is clear that a decision system can have one property without the other. Which is more desirable will be highly context dependent. Throughout the article we have taken the loss function to be known. However, it is clear that loss misspecification is also an important element of robust decision making. Further work is needed to develop a unified approach for dealing with this. Our framework ignores misspecification in the loss function. Certain loss functions are often chosen for computational ease or because they posses other desirable properties such as convexity. Also, elicitation of the true loss function can be difficult (for an example see the application discussed in section \ref{breastcancer}). Hence for completeness, a robustness analysis of a decision system must take this into account. \cite{ruggeri2005robust} (pages 636-639) provides some further discussion and references.

\section*{Acknowledgements}
We thank Luis Nieto-Barajas, George Nicholson and Tristan Gray-Davies for many helpful discussions and suggestions.
Watson was supported by an EPRSC grant from the Industrial Doctorate Centre (SABS-IDC) at Oxford University and by Hoffman-La Roche. 
Holmes gratefully acknowledges support for this research from the Oxford-Man Institute, the EPSRC, the Medical Research Council and the Wellcome Trust.

\appendix
\newpage

\section{Proof of Theorem \ref{theorem:coherence} in Section \ref{formalmethods}.} {\em{Reproduced and amended from}} \citep{Bissiri2013}.\label{app:coherence}

Assume that $\Theta$ contains at least two distinct points, say $\theta_1$ and $\theta_2$. Otherwise, $\pi$ is degenerate and the thesis is trivially satisfied.
To prove this theorem, it is sufficient to consider the case $n=2$ and a very specific choice for $\pi$, taking
$\pi=p_0\delta_{\theta_1}+ (1-p_0)\delta_{\theta_2},$
where $0< p_0< 1$.
Any probability measure absolutely continuous with respect to $\pi$ has to be equal to
$p\delta_{\theta_1}+(1-p)\delta_{\theta_2}$, for some $0\leq p\leq 1$.
Therefore, in this specific situation,
the cost function, $l(\cdot) = \left\{ \E_{\pi}[-L(\theta)] + \lambda^{-1} g(\pi \parallel \pi_I) \right\}$, to be minimised becomes:
\begin{equation*}
\begin{split}
l(p,p_0,L_I)&:=p\,L_I(\theta_1)\, +\,(1-p)\,L_I(\theta_2)\\
&\phantom{=}+\, p\,g\left(\frac{p}{p_0}\right)\,+\,(1-p)\,g\left(\frac{1-p}{1-p_0}\right),
\end{split}
\end{equation*}
where $g$ is a divergence measure, $L_I(\theta_i)=L(\theta_i,I_1)+L(\theta_i,I_2)$ for data $I=(I_1,I_2)$ and $L_I(\theta_i)=L_1(\theta_i,I_j)$ for $I=I_j$, $i,j=1,2$.
Denote by $p_1$ the probability $\pi_{I_1}(\{\theta_1\})$, i.e. the minimum point of $l(p,p_1,L_{(I_1,I_2)})$
as a function of $p$, and by $p_2$ the probability $\pi_{(I_1,I_2))}(\{\theta_1\})$.
By hypotheses, $p_2$ is the unique minimum point of both loss functions
$l(p,p_1,L_{I_2})$ and $l(p,p_0,L_{(I_1,I_2)})$.
Again by hypothesis, we shall consider only those functions
$L_{I_1}$ and $L_{I_2}$
 such that each one of the functions
$l(p,p_0,L_{I_1})$, $l(p,p_1,L_{I_2})$, and $l(p,p_0,L_{(I_1,I_2)})$,
as a function of $p$, has a unique minimum point, which is $p_1$ for the first one and $p_2$
for the second and third one.
The values $p_1$ and $p_2$ have to be strictly bigger than zero and strictly smaller than one: this was proved by
Bissiri and Walker (2012) in their Lemma 2.
Hence, $p_1$ has to be a stationary point of
$l(p,p_0,h_{I_1})$
and $p_2$ of both the functions
 $l(p,p_1,L_{I_2})$ and $l(p,p_0,L_{(I_1,I_2)})$.
Therefore,
\begin{align}\label{f: loss.1}
g'\left(\frac{p_1}{p_0}\right)\,-\,g'\left(\frac{1-p_1}{1-p_0}\right)\,&=\,
L_{I_1}(\theta_2)\, -\, L_{I_1}(\theta_1),\\
\label{f: loss.2}
\, g'\left(\frac{p_2}{p_0}\right)\,-\,g'\left(\frac{1-p_2}{1-p_0}\right)\,&=\,
L_{(I_1,I_2)}(\theta_2)\, -\, L_{(I_1,I_2)}(\theta_1),\\
\label{f: loss.3}
\, g'\left(\frac{p_2}{p_1}\right)\,-\,g'\left(\frac{1-p_2}{1-p_1}\right)\,&=\,
L_{I_2}(\theta_2)\, -\, L_{I_2}(\theta_1).
\end{align}
Recall that
$L_{(I_1,I_2)}=L_{I_2}+L_{I_1}$.
Therefore,
summing up term by term \eqref{f: loss.1} and \eqref{f: loss.3},
and considering \eqref{f: loss.2}, one obtains:
\begin{equation}\label{f: equation}\begin{split}
g'&\left(\frac{p_2}{p_0}\right)\,-\,g'\left(\frac{1-p_2}{1-p_0}\right)\\
&\phantom{XXX}=\,g'\left(\frac{p_1}{p_0}\right)\,-\,g'\left(\frac{1-p_1}{1-p_0}\right)\, +\,
g'\left(\frac{p_2}{p_1}\right)\,-\,g'\left(\frac{1-p_2}{1-p_1}\right).
\end{split}\end{equation}

Recall that by hypothesis \eqref{f: loss.1}--\eqref{f: loss.3} need to hold for every two functions
$L_{I_1}$ and $L_{I_2}$
arbitrarily chosen with the only requirement that $p_1$ and $p_2$ uniquely exist.
Hence, \eqref{f: equation} needs to hold for every $(p_0,p_1,p_2)$ in $(0,1)^3$.
By substituting $t=p_0$, $x=p_1/p_0$ and $y=p_2/p_1$, \eqref{f: equation} becomes
\begin{equation}\label{f: equation.}\begin{split}
g'&\left(xy\right)\,-\,g'\left(\frac{1-txy}{1-t}\right)\\
&\phantom{XXX}=\,g'(x)\,-\,g'\left(\frac{1-tx}{1-t}\right)\, +\,
g'\left(y\right)\,-\,g'\left(\frac{1-txy}{1-tx}\right),
\end{split}\end{equation}
which holds for every $0<t<1$, and every $x,y>0$ such that $x<1/t$ and $y<1/(xt)$.
Being $g$ convex and differentiable, its derivative $g'$ is continuous. Therefore,
letting $t$ go to zero, \eqref{f: equation.}  implies that
\begin{equation}\label{f: equation+}
g'\left(xy\right)
=\,g'(x)\, +\,
g'\left(y\right)\,-\,g'(1)
\end{equation}
holds true for every $x,y>0$. Define the function
$\varphi(\cdot)=g'(\cdot) -g'(1)$.
This function is continuous, being $g'$ such,
and by \eqref{f: equation+},
$\varphi(xy)=\varphi(x)\, +\,\varphi(y)$ holds for every $x,y>0$. Hence, $\varphi(\cdot)$ is $k\ln(\cdot)$ for some $k$, and therefore
\begin{equation}\label{f: log}
 g'(x)\
=\ k\,\ln (x)\ +\ g'(1),\end{equation}
where
\(k\,=\,(g'(2)\,   -\, g'(1))/\ln(2)\).
Being $g$ convex, $g'$ is not decreasing and therefore
$k \geq 0$. If $k=0$, then $g'$ is constant, which is impossible, otherwise, for any $h_I$,
$p_1$ satisfying \eqref{f: loss.1} either would not exist or would not be unique. Therefore, $k$ must be positive.
Being $g(1)=0$ by assumption, \eqref{f: log} implies that
$g(x)\:=\:k\, x\ln(x)\, +\, (g'(1)-k) (x-1)$.
Hence,
\[   g(\pi_1,\pi_2)=  k\int \ln \bigg(\frac{\diff \pi_1}{\diff \pi_2}\bigg) \diff \pi_1 \]

holds true for some $k>0$ and
for every couple of measures $(\pi_1, \pi_2)$ on $\Theta$ such that $\pi_1$ is absolutely continuous with respect to $\pi_2$.

\section{Glossary of Terms}\label{app:glossary}

\begin{table*}[htb]

\begin{tabular}{c c p{13cm}}
Notation && Definition\\
\toprule
$\Theta$ && Parameter space describing the uncertainty in the 'small world' of interest.\\[3pt]
$a\in\mathcal{A}$ && Set of actions or alternatives.\\[3pt]
$L(a,\theta)$ or $L_a(\theta)$ && Loss function defined as mapping $\mathcal{A}\times\Theta\rightarrow \mathbb{R}^+$\\[3pt]
$L_{(a,a')}(\theta)$ && Regret loss function: $L_a(\theta)-L_{a'}(\theta)$\\[3pt]
$\pi_I$ && The approximating or reference model. This could be a Bayesian posterior, or just any distribution over the uncertainty $\Theta$.\\[3pt]
$C$ && The radius of the Kullback-Leibler ball centred at $\pi_I$\\[3pt]
$\lambda_a(C)$ && Exponential tilting parameter given in equation \ref{eq:pi_sup} for action $a$ corresponding to least favourable distribution in KL ball of radius $C$\\[3pt]
$\Gamma_C$ && Set of distributions $\pi$ satisfying KL$(\pi || \pi_I) \leq C$ (KL ball) \\[3pt]
$\Gamma_C^{\text{rev}}$ && Set of distributions $\pi$ satisfying KL$(\pi_I || \pi) \leq C$ (reverse KL ball) \\[3pt]
$\tilde{\pi}$, $\tilde{\pi}_I$ && The distributions $\pi$, $\pi_I$ approximated by a bag of Monte Carlo samples\\[3pt]
$\pi^{\sup}_{a,C}$ && The least favourable distribution for action $a$ in the KL ball of radius $C$ centred at $\pi_I$\\[3pt]
$\psi^{\sup}_a(C)$ && Expected loss of action $a$ under $\pi^{sup}_{a,C}$ \\[3pt]
$[\psi^{\inf}_a(C), \psi^{\sup}_a(C)]$ && Interval of expected loss of  action $a$ in $\Gamma_C$\\[3pt]
$\pi^{\sup}_{(a,a'),C}$ && Least favourable distribution corresponding to regret loss function $L_{(a,a')}(\theta)$\\[3pt]
\bottomrule
\end{tabular}\label{glossary}
\end{table*}

\section{Local Sensitivity Analysis}\label{localsens}
We can look at the derivative of least favourable expected loss for a given action either as a function the ball size $C$ or the tilting parameter $\lambda$. Firstly, differentiating wrt $\lambda$ gives:
\[ \frac{\mathrm{d}}{\mathrm{d}\lambda} \E_{\pi_{a,c(\lambda)}^{\sup}}[L_a] = \mathrm{Var}_{\pi^{\sup}_{a,C(\lambda)}} [L_a(\theta)]
\]
Setting $\lambda$ to 0, we see that the sensitivity of the expected loss estimate is given by the variance of the loss under $\pi_I$. Differentiating now w.r.t. $C$ we need the following (applying the chain rule):
\[  \frac{\mathrm{d}}{\mathrm{d}\lambda} C_{\lambda} =  \E_{\pi^{\sup}_{a,C(\lambda)}}[L_a(\theta)] - \E_{\pi_I} [L_a(\theta)]\]
\begin{proof}
We define $\psi(\lambda) = \E_{\pi^{\sup}_{a,C(\lambda)}}[L_a(\theta)] = \int_{\Theta} L_a(\theta)\pi_I(\theta)e^{\lambda L_a(\theta)}Z_{\lambda}^{-1} \mathrm{d}\theta$

where $Z_{\lambda} = \int_{\Theta} \pi_I(\theta)e^{\lambda L_a(\theta)}\mathrm{d}\theta$ (normalising constant).

\[ \frac{\mathrm{d} \psi}{\mathrm{d}\lambda} = \frac{\mathrm{d}}{\mathrm{d}\lambda} \int_{\Theta} L_a(\theta) \pi^{\sup}_{a,C(\lambda)}(\theta) \mathrm{d}\theta 
 = \int_{\Theta} L_a(\theta)\pi_I(\theta)  \frac{\mathrm{d}}{\mathrm{d}\lambda} \left( e^{\lambda L_a(\theta)} Z^{-1}_{\lambda} \right) \mathrm{d}\theta\]
\[ = \int_{\Theta} L_a(\theta)\pi_I(\theta) \left( \frac{L_a(\theta)e^{\lambda L_a(\theta) } Z_{\lambda} - e^{\lambda L_a(\theta)} \frac{\mathrm{d} Z_{\lambda}}{\mathrm{d}\lambda} }{Z_{\lambda}^2} \right) \mathrm{d}\theta  \]
\[ = \int_{\Theta} L_a(\theta)^2 \pi_I(\theta) e^{\lambda L_a(\theta)} Z_{\lambda}^{-1} \mathrm{d}\theta - \int_{\Theta} L_a(\theta)\pi_I(\theta)e^{\lambda L_a(\theta)} Z_{\lambda}^{-1} \left( \int_{\Theta} L_a(\theta)\pi_I(\theta)e^{\lambda L_a(\theta)} Z_{\lambda}^{-1} \mathrm{d}\theta  \right) \mathrm{d}\theta \]
\[ = \E_{\pi^{\sup}_{a,C(\lambda)}}[L_a(\theta)^2] - \E_{\pi^{\sup}_{a,C(\lambda)}} [L_a(\theta)]^2  = \mathrm{Var}_{\pi^{\sup}_{a,C(\lambda)}} [L_a(\theta)]\]

For $\lambda > 0$, define the corresponding KL divergence $C_{\lambda}$:

\[ C_{\lambda} := K(\lambda) := \int_{\Theta} \pi_I(\theta) \log \frac{\pi_I(\theta) Z_{\lambda}}{\pi_I(\theta) e^{\lambda L_a(\theta)}} \mathrm{d}\theta  \]

Hence:

\[ \frac{\mathrm{d}K}{\mathrm{d}\lambda} =  \frac{\mathrm{d}}{\mathrm{d}\lambda}\int_{\Theta} \pi_I(\theta)\left(\log Z_{\lambda} - \lambda L_a(\theta) \right)\mathrm{d}\theta =  \frac{\mathrm{d}}{\mathrm{d}\lambda} \log Z_{\lambda} - \int_{\Theta}  \frac{\mathrm{d}}{\mathrm{d}\lambda}\lambda \pi_I(\theta) L_a(\theta) \mathrm{d}\theta \]
\[ = Z_{\lambda}^{-1} \int_{\Theta}L_a(\theta) \pi_I(\theta) e^{\lambda L_a(\theta)} \mathrm{d}\theta - \int_{\Theta} \pi_I(\theta) L_a(\theta)\mathrm{d}\theta = \E_{\pi^{\sup}_{a,C(\lambda)}}[L_a(\theta)] - \E_{\pi_I} [L_a(\theta)]  \]

So the reciprocal derivative is:

\[  \frac{\mathrm{d}}{\mathrm{d}C_{\lambda}}(K^{-1}) = \frac{1}{ \frac{\mathrm{d}K}{\mathrm{d}\lambda}\left( K^{-1}(C_{\lambda}) \right)}  \]
\end{proof}

\section{Reverse KL neighbourhood: $\KL(\pi_I || \pi)$}\label{KLreverse}

The change of neighbourhood from KL($\pi \parallel \pi_I$) to KL($\pi_I \parallel \pi$) results in a non-analytic solution to the local-minimax and maximin distributions. However we can use  numerical methods to compute the minimax optimisation. We consider the numerical solution to $\pi^{\sup}_{a,C} = \arg\sup_{\pi \in \Gamma^{\text{rev}}_C} \E_{\pi}[L_a(\theta)]$, with $\Gamma^{\text{rev}}_C$  now defined herein as $\Gamma^{\text{rev}}_C = \{\pi : \KL(\pi_I \parallel \pi) \le C\}$ for $C \ge 0$.

\paragraph{Numerical approximation:}
Consider a stochastic representation of $\pi_I$ via
\begin{eqnarray}
\label{eq:stochrep}
\tilde{\pi}_I & \equiv & \frac{1}{m} \sum_{i=1}^m \delta_{\theta_i}(\theta) \\
\theta_i & \sim & \pi_I(\theta) \nonumber
\end{eqnarray}
where $\theta_i$ are iid draws from $\pi_I$ and $m \to \infty$. In practice this is often the actual model that statisticians work with, via a ``bag of samples'' Monte Carlo representation of $\pi_I$. We note for non-degenerate functionals $g(\theta)$ of interest $\E_{\tilde{\pi}_I}[g(\theta)] $ converges to $\E_{\pi_I}[g(\theta)]$, as $m \to \infty$. To make the solution tractable in defining a KL neighbourhood around $\pi_I$ we will use the neighbourhood around $\tilde{\pi}_I$. Moreover, in considering the KL divergence between $\pi_I$ and an alternative model $\pi \in \Gamma^{\mathrm{rev}}_C$ we will work with a stochastic approximation to $\pi$ represented as mixtures of the atoms $\{\delta_{\theta_1}, \delta_{\theta_2}, \ldots, \delta_{\theta_m}\}$ in (\ref{eq:stochrep}), 
\begin{eqnarray}
\label{eq:is}
\tilde{\pi} & = & \sum_i w_i \delta_{\theta_i}(\theta)
\end{eqnarray}
for $0 \le w_i \le 1$, $\sum_i w_i=1$, where the $w_i$'s can be interpreted as importance weights $w_i \propto \pi(\theta_i) / \pi_I(\theta_i)$, with $\E_{\tilde{\pi}}[g(\theta)] \to \E_{\pi}[g(\theta)]$, as $m \to \infty$.

The KL divergence between $\pi_I$ and $\pi$ can then be approximated via the KL divergence of their stochastic representations, 
$$
{\KL}(\tilde{\pi}_I, \tilde{\pi}) =  \frac{1}{m} \sum_{i=1}^m \log \frac{1}{m*w_i}.
$$
with KL ball $\tilde{\Gamma}^{\mathrm{rev}}_C$ defined similarly, $\tilde{\Gamma}^{\mathrm{rev}}_C = \{\tilde{\pi} : \KL(\tilde{\pi}_I, \tilde{\pi}) \leq C\}$.

From these definitions, we will now look for the probability measure maximisation 
\begin{equation}
\label{eq:pi_approx}
\tilde{\pi}^{\sup}_{a,C} = \arg \sup_{\tilde{\pi} \in \tilde{\Gamma}^{\mathrm{rev}}_C} \left\{ \E_{\tilde{\pi}}[L_a(\theta)] \right\}
\end{equation}
Given the atomic structure of $\tilde{\pi}$ the maximisation (\ref{eq:pi_approx}) leads to a convex optimisation in the weights,
\begin{align*}
\tilde{\pi}^{\sup}_{a,C}  = & \sum_i w^*_i \delta_{\theta_i}(\theta) \\
w^*  = & \arg \sup_{w} \left\{  \sum_i w_i L_a(\theta_i) :  - \frac{1}{m}\sum_i \log(w_i)  \leq  C + \log m, \sum_i w_i  =  1 \right\}
 \end{align*}
for which standard numerical methods can be applied.

\bibliographystyle{authordate1}
\bibliography{ref}

\end{document}